\journal{arxiv} 
\title{Enhancing Speckle Statistics for Imaging Inside Scattering Media}
\author[1]{Wei-Yu Chen}
\author[1]{Matthew O’Toole}
\author[1]{Aswin C. Sankaranarayanan}
\author[2,*]{Anat Levin}
\affil[1]{Carnegie Mellon University, 5000 Forbes Avenue Pittsburgh, PA 15213}
\affil[2]{Technion, Israel, Technion City, Haifa 3200003, Israel}
\affil[*]{Corresponding author: anat.levin@ee.technion.ac.il}
\begin{abstract}
We exploit memory effect correlations in speckles for the imaging of incoherent fluorescent sources behind scattering tissue.
These correlations are often weak when imaging thick scattering tissues and complex illumination patterns, both of which greatly limit the practicality of associated techniques.
 %
In this work, we introduce a spatial light modulator between the tissue sample and the imaging sensor and capture multiple modulations of the speckle pattern.
We show that, by correctly designing  the modulation patterns and the associated reconstruction algorithm, the statistical correlations in the measurements can be greatly enhanced.
%
We exploit this to demonstrate the reconstruction of mega-pixel sized fluorescent patterns behind the scattering tissue.
\end{abstract}
\begin{document}

\maketitle

\section{Introduction}
Scattering of light is one of the main barriers preventing the imaging of fluorescent sources located deep inside biological tissue.
A microscope imaging a set of incoherent sources inside the tissue usually observes a noisy speckle pattern  that has little resemblance to the actual sources.


Despite the noise-like   appearance, speckle has strong statistical properties, such as the {\em memory effect} (ME), implying that the patterns generated by nearby sources are correlated.
It has been previously observed that due to this ME correlation, the auto-correlation of a speckle pattern generated by multiple independent sources is equivalent to the auto-correlation of the latent source layout~\cite{Katz2014,Bertolotti2012,Takasaki:14,Edrei:16,Edrei2016,Hofer:18, Wu:17,Wu:20,Wang:20,Chang2018}.
This fascinating property has drawn a lot of interest since it allows the recovery of latent illuminators,  completely invisible to the naked eye, purely by exploiting speckle statistics.


\begin{figure*}[h]
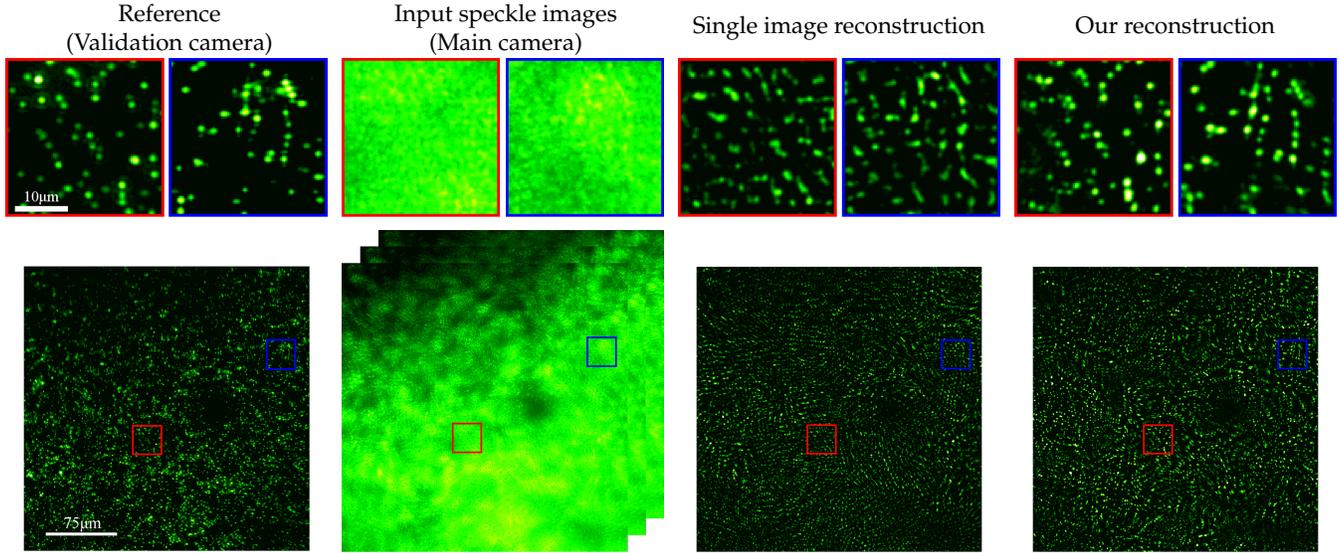

\centering
\begin{tabular}{*4{X}@{}}
\makecell{~Reference \\~(Validation camera)} & \makecell{~Input speckle images \\ ~(Main camera)} &Single image reconstruction & Our reconstruction\\
\comparepics{./figure/fluorescent/teaser_part} \\
\comparepics{./figure/fluorescent/teaser_marked}\\
\end{tabular}
\caption{\label{fig:teaser}\textbf{Reconstructing a wide-range fluorescent bead target from modulated speckles.} We reconstruct the layout of fluorescent beads spread behind a chicken breast tissue slice, whose thickness was measured at about $\sim150{\mu}m$. The beads are attached to the tissue,  separated only by a $150{\mu}m$ cover glass. The beads spread over a field of view of $300\mu m\times 300 \mu m$, occupying a one mega pixel image. While the ME correlations in a single image capture are too noisy to provide good reconstruction, we optically modulate the speckle field, capturing $54$ shots with different modulations. This modulation allows us to amplify statistical correlations, leading to accurate reconstruction of a complex illuminator pattern, despite high degradation and limited speckle contrast in the input images. The lower part of the figure includes the full images, while at the top we zoom on two sub windows for high resolution visualization.}
\end{figure*}

Despite its potential, there are still major challenges to solve before the idea can apply to realistic biomedical imaging scenarios.  The main barrier is that ME correlations are very weak and the amount of information that can be inferred from them is limited. To circumvent this, past experimental demonstrations have made use of various simplifying assumptions.
While fluorescent sources of interest in realistic biomedical imaging scenarios are located inside the scattering sample rather than far behind it,
most demonstrations of speckle correlation-based see-through algorithms consider sources located a few centimeters beyond the sample.
 This is due to the fact that when the sources are further from the scattering layer~\cite{SeeThroughSubmission}, they span a smaller range of angles relative to the sample and therefore speckle correlations are stronger.
A second issue is that the contrast of the observed speckle pattern decays as more independent emitters are present, and hence, the technique is mostly applicable  to very sparse emitter layouts.

In this work, we exploit strategies for maximizing the amount of ME correlation we can extract from speckle images.
To this end, we build on a simple observation: imaging the same layout of fluorescent sources through different tissue layers would allow us to obtain independent speckle images, thereby leading to independent auto-correlations.
Averaging such independent auto-correlations can suppress noise in the correlations and boost the quality of the illuminator patterns that we can infer from it.
While some temporal dynamics are present in live tissue, sequential images of illuminators inside the same tissue are still highly correlated. 
Rather, we use a programmable spatial light modulator (SLM) mask in the optical path imaging the tissue and use it to modulate the field, leading to different speckle images. We discuss various forms of modulation and arrive at a spatial form of modulation we term {\em \AlgName}. We show that despite the fact that we image the same tissue layer multiple times, we can get uncorrelated measurements that maximize ME correlation.

To further maximize the amount of information that we can extract from speckle data, we follow an idea recently proposed by~\cite{SeeThroughSubmission}, which argues that when light sources are located inside the sample rather than far behind it, the speckle pattern generated by each source has a limited support and does not spread over the full sensor. Thus, rather than computing a global full-frame auto-correlation, they compute local auto-correlations in the form of a Ptychography algorithm~\cite{zhou2019retrieval,Gardner:19,Li:2019:Ptycho,Li:B:2019:Ptycho,Shekel_2020}. These local correlations can boost the signal to noise ratio of the detected correlation by a few orders of magnitude.

Overall, we demonstrate the reconstruction of wide, complex fluorescent bead patterns inside scattering tissue. Our approach  captures only a few dozen images of the tissue, compared to hundreds of images used by recent approaches that image fluorescent sources behind scattering layers~\cite{Boniface2020,Zhu22} when using single-photon fluorescent emission; which is linear with respect to excitation energy.
Compared to recent wavefront shaping approaches~\cite{Boniface:19,Dror22} that only facilitate imaging of a local neighborhood governed by the limited extent of the ME, our approach recovers mega-pixel  images  over a wide field of view, as demonstrated in \figref{fig:teaser}.



\section{Principle} \label{sec:principle}
\subsection{A review of ME-based imaging}
We start with a quick review of the ME and its application for seeing inside scattering media.
Let $\oinp,\tinp$ denote the position of two illumination sources and $u^{\oinp}(\snsp),u^{\tinp}(\snsp)$ the fields they generate, where  $\snsp$ denotes a sensor coordinate.
The ME states that speckle fields generated by {\em nearby} sources are related by a tilt-shift correlation~\cite{osnabrugge2017generalized}.
Recently, Bar \etal \cite{single-sct-iccp-21} have offered a simple model for this relation, stating that
\BE\label{eq:me-corr-tilt-shift}
u^{\oinp}(\snsp)\approx e^{ik \alpha <\Dl,\btau>} u^{\tinp}(\snsp+\Dl),
\EE
with $\Dl=\tinp-\oinp$ the displacement between the sources,  $\btau=\snsp-\oinp$ the displacement between the source to the observation point, and $\alpha \approx \frac{-3}{2L}$, where $L$ is the tissue thickness.
This model assumes that we image the volume with a microscope whose sensor plane is conjugate to the plane of  the illuminators  $\oinp,\tinp$.

An image sensor only measures the intensity of the speckle pattern which we denote by
\BE\label{eq:S}
S^{\inp^n}(\snsp)= \abs{u^{\inp^n}(\snsp)}^2.
\EE
In the presence of multiple incoherent illuminators, we observe an intensity image $I(\snsp) = \sum_{n} S^{\inp^n}(\snsp) = \sum_{n} \abs{u^{\inp^n}(\snsp)}^2$.
Assuming source displacements are small enough for ME correlation to hold, speckle intensities from nearby sources are shifted versions of each other, $S^{\oinp}(\snsp)\approx S^{\tinp}(\snsp + \Dl)$. Note that since we deal with intensity images, phase adjustments are not required.

We denote by $S^{\bzero}(\snsp)$ the speckle from an illumination source at the center of the frame.
With this notation, we can express the sum of speckles from incoherent sources as
\BE\label{eq:conv}
I=\So(\snsp) \ast O,
\EE
where $O$ is a binary image denoting the location of the illumination sources, and $\ast$ denotes convolution. To detect fluorescent sources through scattering media, our goal is to recover the latent illuminator pattern $O$ from an input speckle image $I$.

We now filter $I$ and $\So$ to locally have a zero mean
\BE\label{eq:zero-mean}
\bI=I-g \ast I, \quad  \bSo = \So-g \ast \So,
\EE
where $g$ is a low pass filter.
We note that \equref{eq:conv} also holds if we replace $I,\So$ with $\bI,\bSo$, and we can express $\bI=\bSo \ast  O$.
Since $\bSo$ is a random zero mean signal, its auto-correlation is approximately an impulse function \cite{Bertolotti2012} \BE \label{eq:impulse-assump} \bSo \star \bSo \approx\delta.\EE   With this approximation \cite{Bertolotti2012,Katz2014}   derive the relationship:
\BE\label{eq:autocorrelation}
\bI \star \bI =  (\bSo \star \bSo) \ast (O\star O) \approx O \star O,
\EE
where $\star$ denotes cross correlation.
Thus, the auto-correlation of the input speckle intensity is equivalent to the auto-correlation of the desired latent image $O$. As a result, one can recover $O$ from $\bI \star \bI$ using a phase retrieval algorithm~\cite{Bertolotti2012}.

\paragraph{Challenges.} The observation made by \equref{eq:autocorrelation} is very compelling because it suggests that latent illuminators $O$ can be recovered from a noisy speckle image $I$, despite the fact that to the untrained eye, the input images carry no similarity to the latent source layout. Yet, it involves two major assumptions that limit
 its practical applicability.

 The first problem is that the ME correlation is not exact, especially when the displacement between the illuminators increase.
  A second problem is that since the speckle pattern $\So$  emerging from a single source  has a limited support and only spans a finite number of pixels, its auto-correlation outlined in \equref{eq:impulse-assump} is not a perfect impulse, but involves residual noise.  Effectively, in realistic scenarios, $\bI \star \bI$ is a {\em very} noisy approximation to $O \star O$. This noise increases  as more independent illuminators are present in $O$.

 In \figref{fig:autocorrelation}(b), we show the auto-correlation of a speckle image $\bI$ composed of a sparse layout of sources. We compare two layouts with a different number of sources. We can see that as more incoherent sources are included, the auto-correlation is very noisy and does not resemble $O\star O$. Our goal in this work is to improve the contrast of this auto-correlation by capturing multiple modulations of the speckle signal.

\newcommand*{\doublec}[1]{%
    \multicolumn{2}{c}{ \shifttext{-2em}{#1}}
}

\begin{figure}[t]
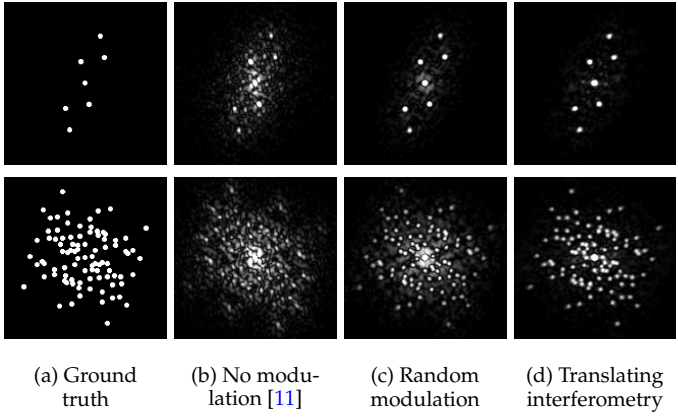

\centering
\begin{tabular}{*4{P}@{}}
\comparecorr{./figure/compare/collisionfree_fewer_A} \\
\comparecorr{./figure/compare/collisionfree_A} \\
\small{(a) Ground truth} & \small{\makecell{(b) No modu-\\lation \cite{SeeThroughSubmission}}}& \small{(c) Random modulation}& \small{(d) Translating interferometry}
\end{tabular}
\vspace{-1em}
\caption{\label{fig:autocorrelation}\textbf{Comparing speckle auto-correlation with different modulation approaches.} The two rows compare illuminator layouts with different complexities, while the columns evaluate different modulation strategies. All results use the same number of shot images. (a) Ground truth  auto-correlation. (b) Without any modulation, reconstructed auto-correlation is noisy, and when the target is complex (2nd row), it is almost unrecognizable. (c) Random modulation can improve contrast, but still contains noise. (d) {\AlgNameC} can clearly recover the auto-correlation. }
\end{figure}

\subsection{Improving auto-correlation contrast} \label{sec:impcorr}
To analyze the contrast of the speckle correlation, we introduce the following notation. We denote by $\Listdl$ the set of all displacements $\Dl$, such that our latent pattern includes a pair of illuminators $(n,m)$ displaced by $\Dl$ :
\BE
\Listdl=\left\{\Dl|\exists (n,m),\;\; \Dl=\inp^m-\inp^n \right\},
\EE
and by $\CListdl$ the list of all other displacements. We denote the speckle auto-correlation by $\bC^{\bI}$, which is defined for a displacement $\Dl$ as:
\BE\label{eq:auto-corr}
\bC^{\bI}(\Dl)=\sum_\snsp \bI(\snsp)\bI(\snsp+\Dl).
\EE
Intuitively, the speckle correlation has good contrast if $\bC^{\bI}(\Dl)$ is high for  displacements  $\Dl\in \Listdl$, corresponding to real illuminator positions; and is low for all other displacements $\Dl\in \CListdl$.
We define the correlation contrast using the following signal to noise metric:
\BE\label{eq:contrast}
	\CorCont\left(\bC^{{\bI}}\right)=\frac{\frac{1}{\abs{\Listdl}}\sum_{\Dl\in \Listdl}\E{\bC^{{\bI}}(\Dl)}^2}{\frac{1}{\abs{\CListdl}} \sum_{\Dl\in \CListdl}\Var{\left[ \bC^{{\bI}}(\Dl)\right]}}
\EE
One way to increase the correlation contrast used by \cite{Katz2014}, is to capture multiple images of the latent pattern $O$ behind different scattering layers. Effectively, we measure $I_t=S^\bzero_t\star O$  with different speckle patterns $S^\bzero_t$. The auto-correlation is then evaluated as the average of the individual auto-correlations
\BE\label{eq:avg-auto-corr}
\bbC(\Dl)=\avgt \bC^{\bI_t}(\Dl),
\EE
with $\bC^{\bI_t}=\bI_t\star \bI_t$ as defined in \equref{eq:auto-corr}.

In  supplement Sec. A.A, we formally prove
the following.
\begin{claim}\label{claim:cor-cont-linear}
If the speckle patterns $S^\bzero_t$ are uncorrelated with each other for different $t$ values, then replacing $\bC^{\bI}$ with $\bC^{\bI_1,\ldots,\bI_T}$ in the correlation contrast of \equref{eq:contrast} increases the contrast {\em linearly} with the number of measurements $T$, i.e.,
\BE
\CorCont \left(\bbC\right)=T\cdot \CorCont \left(\bC^{\bI}\right).
\EE
\end{claim}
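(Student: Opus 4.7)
The plan is to track how the numerator and denominator of \equref{eq:contrast} transform when $\bC^{\bI}$ is replaced by $\bbC$. Since the speckle realizations $\{S^\bzero_t\}$ are identically distributed, each individual auto-correlation $\bC^{\bI_t}(\Dl)$ shares a common mean and variance with $\bC^{\bI}(\Dl)$, so the claim reduces to tracking how the linear combination $\bbC = \avgt\bC^{\bI_t}$ rescales those two moments.

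First I would handle the numerator by linearity of expectation: $\E{\bbC(\Dl)} = \frac{1}{T}\sum_t \E{\bC^{\bI_t}(\Dl)} = \E{\bC^{\bI}(\Dl)}$, so squaring and averaging over $\Dl\in\Listdl$ leaves the numerator of $\CorCont(\bC^{\bI})$ unchanged. For the denominator I would expand the variance of the average as
\BE
\Var{\bbC(\Dl)} = \frac{1}{T^2}\Var{\sum_t \bC^{\bI_t}(\Dl)} = \frac{1}{T^2}\sum_t\Var{\bC^{\bI_t}(\Dl)} = \frac{1}{T}\Var{\bC^{\bI}(\Dl)},
\EE
where the middle equality drops all cross-covariances between $\bC^{\bI_s}(\Dl)$ and $\bC^{\bI_t}(\Dl)$ for $s\ne t$ using the uncorrelatedness hypothesis. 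Averaging over $\Dl\in\CListdl$ then divides the original denominator by $T$ while the numerator is untouched, giving $\CorCont(\bbC)=T\cdot\CorCont(\bC^{\bI})$.

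The main obstacle I anticipate is precisely that cross-covariance step. The hypothesis is phrased as uncorrelatedness of the \emph{fields} $S^\bzero_t$, which is a second-order statement, whereas $\bC^{\bI_t}(\Dl)$ is a quartic functional of the field (quadratic in $\bI_t$, which by \equref{eq:conv} and \equref{eq:zero-mean} is itself linear in $S^\bzero_t$). Second-order uncorrelatedness does not in general force vanishing covariance of such fourth-order quantities. I would therefore open the proof by reading the hypothesis as full statistical independence of the speckle realizations across $t$ --- which is the physically meaningful notion here, since different shots correspond to independent modulations of the random medium --- so that any functional of $S^\bzero_s$ is uncorrelated with any functional of $S^\bzero_t$ for $s\neq t$. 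Once this is granted, the rest of the argument is routine moment bookkeeping and the factor $T$ in the contrast falls out immediately.
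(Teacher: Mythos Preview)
Your proposal is correct and mirrors the paper's own argument almost exactly: the paper also shows the numerator is invariant by linearity of expectation and then expands the second moment of $\bbC(\Dl)$, factoring the cross terms $\E{\bC^{I_{t_1}}(\Dl)\,\bC^{I_{t_2}}(\Dl)^*}$ for $t_1\neq t_2$ and using that $\E{\bC^{I_t}(\Dl)}=0$ for $\Dl\in\CListdl$ to kill them, leaving only the $T$ diagonal terms and a $1/T$ scaling in the denominator. The concern you raise about second-order uncorrelatedness of $S^\bzero_t$ not automatically propagating to the quartic quantities $\bC^{\bI_t}$ is legitimate, and the paper makes the same leap without comment; your proposed resolution --- to read the hypothesis as genuine independence across $t$ --- is exactly the right way to close that gap.
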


While this is a promising idea, when the sources are located inside the tissue, it is not easy to image the same illuminators through different scattering layers.
Rather, in this work, we would like to modify the speckle patterns by adjusting the optics.

\paragraph{Random modulation. } Intuitively, to create different speckle intensity images, we can put a random phase mask in the optical path between the sample to the imaging sensor. If we put this mask in the Fourier plane, it would translate into a convolution of the fields $u^{\inp}(\snsp)$ with the Fourier transform of the mask, which we denote as $h_t$.
This would lead into an intensity image $I_t=\sum_n S^{\inp^n}_t$ with
\BE\label{eq:randS}
S^{\inp^n}_t=\abs{u^{\inp^n} \ast h_t}^2.
\EE
In \figref{fig:autocorrelation}(c), we compare the auto-correlation of a single speckle image to the average auto-correlation with 54 random masks $h_t$. Averaging random masks rejects noise and improves the correlation contrast, but it is still noisy.

To understand why random modulation is sub-optimal, we review the tilt-shift correlation in \equref{eq:me-corr-tilt-shift}. If the fields $u^{\oinp},u^{\tinp}$ generated by different illuminators  would follow a pure shift,
 then $u^{\oinp} \ast h_t, u^{\tinp} \ast h_t$ would also be shifted versions of each other. However, according to \equref{eq:me-corr-tilt-shift}, fields from different sources vary by phase, and hence a convolution with $h_t$ largely degrades the correlation and  $S^{\oinp}_t(\snsp)$ would differ from $S^{\tinp}_t(\snsp + \Dl)$.

\paragraph{Translating interferometry. } Our goal in this work is to change the optical path such that we can capture multiple uncorrelated  speckle patterns, and yet  maintain the ME correlation.
To this end, we design an interferometric setup that allows us to measure the interference between $u^{\inp^n}$ and a shifted copy of it, which we name {\em \AlgName}. This leads  to a measurement of the form
\BE\label{eq:ourS}
S^{\inp^n}_t= u^{\inp^n}(\snsp)u^{\inp^n}(\snsp+ \pt)^*,
\EE
where $\pt$ denotes the displacement vector. We acquire these measurements using an incoherent interferometry scheme, described in supplement Sec. A.C.

When several incoherent sources are present, we will acquire an incoherent summation
\BE \label{eq:intef-It} I_t=\sum_n S^{\inp^n}_t.\EE This interferometric measurement is already a zero mean signal and there is no need to subtract the mean as with the intensity measurements of \equref{eq:zero-mean}.

 The \AlgName measurements provide two main benefits which we summarize in the following claims and prove in supplement Sec. A.A. First, unlike a naive optical mask in \equref{eq:randS}, it does not reduce the ME correlation of the original speckles. Second, despite the fact that these measurements are captured from the same tissue layer and they are not independent, they are still {\em uncorrelated}.
\begin{claim}
For displacements in the order of a few speckle grains, the correlation between \AlgName signals $S^{\oinp}_{t},S^{\tinp}_{t}$ produced by different illuminators $\oinp,\tinp$ is approximately the same as the correlation of the original speckle intensity images.	
\end{claim}

\begin{claim}\label{clm:uncorrelated}
For displacements ${\opt},{\tpt} $ whose distance $\|{\opt}-{\tpt}\|$ is larger than the speckle grain, the signals $S^{\inp^n}_{t_1},S^{\inp^n}_{t_2}$ are uncorrelated.
\end{claim}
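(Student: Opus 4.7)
The natural starting point is to model the speckle field $u^{\inp^n}(\snsp)$ as a zero-mean, circularly-symmetric complex Gaussian random field---the standard statistical description of fully-developed speckle, justified by the central limit theorem applied to a large number of independent scatterer contributions. Under this model all joint statistics are determined by the two-point field correlation $C(\Delta) := \mathbb{E}[u^{\inp^n}(\snsp+\Delta)\,u^{\inp^n}(\snsp)^{*}]$, which is concentrated near $\Delta = \bzero$ and decays to zero once $\|\Delta\|$ exceeds a single speckle grain.

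My plan is to compute the covariance
\begin{equation*}
\mathrm{Cov}\bigl(S^{\inp^n}_{t_1},\,S^{\inp^n}_{t_2}\bigr) = \mathbb{E}\bigl[S^{\inp^n}_{t_1}\,(S^{\inp^n}_{t_2})^{*}\bigr] - \mathbb{E}\bigl[S^{\inp^n}_{t_1}\bigr]\,\mathbb{E}\bigl[S^{\inp^n}_{t_2}\bigr]^{*}
\end{equation*}
directly from the definition $S^{\inp^n}_{t_1} = u^{\inp^n}(\snsp)\,u^{\inp^n}(\snsp+\opt)^{*}$ and the analogous expression with $\tpt$. The first moments are read off immediately: $\mathbb{E}[S^{\inp^n}_{t_1}] = C(-\opt)$ and $\mathbb{E}[S^{\inp^n}_{t_2}] = C(-\tpt)$. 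For the fourth-order moment I would invoke Isserlis' (Wick's) theorem for circular complex Gaussians to expand $\mathbb{E}[u(\snsp)\,u(\snsp+\opt)^{*}\,u(\snsp)^{*}\,u(\snsp+\tpt)]$ as a sum over pairings of an unconjugated factor with a conjugated one.

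Exactly two pairings contribute. The \emph{matching} pairing $u(\snsp)\leftrightarrow u(\snsp+\opt)^{*}$ together with $u(\snsp+\tpt)\leftrightarrow u(\snsp)^{*}$ yields $C(-\opt)\,C(\tpt)$, which cancels exactly against $\mathbb{E}[S^{\inp^n}_{t_1}]\,\mathbb{E}[S^{\inp^n}_{t_2}]^{*}$. The \emph{crossed} pairing $u(\snsp)\leftrightarrow u(\snsp)^{*}$ together with $u(\snsp+\tpt)\leftrightarrow u(\snsp+\opt)^{*}$ yields $C(0)\,C(\tpt-\opt)$, so the covariance collapses to
\begin{equation*}
\mathrm{Cov}\bigl(S^{\inp^n}_{t_1},\,S^{\inp^n}_{t_2}\bigr) = C(0)\,C(\tpt-\opt).
\end{equation*}
The hypothesis that $\|\opt-\tpt\|$ exceeds the speckle grain then forces $C(\tpt-\opt) \approx 0$ by the decay of the field correlation, establishing the claim.

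The main obstacle is not analytic depth but \emph{stating the right hypotheses cleanly}: formally justifying that $u^{\inp^n}$ may be treated as a stationary, circularly-symmetric complex Gaussian on the scale of a few speckle grains (standard in speckle physics but worth writing out), and tracking the sign/conjugation conventions of $C$ carefully so that the matching Wick pairing precisely cancels the subtracted mean product rather than, for instance, doubling it. Once those conventions are pinned down, the claim reduces to a single application of Isserlis' theorem.
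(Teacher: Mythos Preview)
Your argument is correct and lands on exactly the right quantity, $C(0)\,C(\tpt-\opt)$, which vanishes once $\|\opt-\tpt\|$ exceeds a speckle grain. The paper's own proof is a slightly coarser version of the same computation: rather than modelling $u^{\inp^n}$ as a circular complex Gaussian with a finite-width correlation $C(\Delta)$ and invoking Isserlis' theorem, the paper simply assumes that speckle values at \emph{distinct} sensor pixels are independent (an idealization it acknowledges up front). Under that stronger assumption the fourth moment $\E{|u(\snsp)|^2\,u(\snsp+\opt)^*\,u(\snsp+\tpt)}$ factors directly into three independent expectations, two of which vanish by the zero-mean property, and likewise $\E{S^{\inp^n}_t}=0$ exactly. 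Your route is more general---it keeps the finite speckle-grain width explicit and shows the covariance is \emph{proportional} to $C(\tpt-\opt)$ rather than identically zero---while the paper's route is shorter because independence collapses the Wick expansion before it even starts. Both buy the same conclusion; yours makes the role of the speckle-grain hypothesis more transparent.
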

The observation in \clmref{clm:uncorrelated} is central for this paper.
The fact that different displacements lead to uncorrelated speckle measurements  means that according to \clmref{claim:cor-cont-linear}, we could average them and the auto-correlation contrast would improve {\em linearly} with the number of measurements.

The auto-correlation of the \AlgName measurements relates to the auto-correlation of the hidden illuminator pattern $O$, but unlike pure intensity speckles, with the above modulations a phase correction is needed, which we derive in the following claim, and prove in supplement Sec. A.A.

\begin{claim}\label{claim:trans-interf-auto-corr}
	Using the \AlgName measurements of \equref{eq:ourS}, the speckle auto-correlation  $\bC^{I_t}=I_t\star I_t$ approximates  the auto-correlation of the latent pattern $\bC^O=O\star O$,  times a phase ramp correction
	\BE\label{eq:phase-ramp-corr}
	\bC^{I_t}(\Dl)\approx e^{-jk\alpha<\Dl,\pt>}\bC^{O}(\Dl).
	\EE
	\end{claim}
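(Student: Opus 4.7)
The plan is to expand the interferometric image $I_t=\sum_n S^{\inp^n}_t$ as a sum over sources, rewrite the auto-correlation $\bC^{I_t}(\Dl)$ as a double sum over source pairs $(n,m)$, and separate the ``matching'' pairs, where $\inp^m-\inp^n=\Dl$, from the rest. For each pair I would invoke the ME tilt-shift relation \equref{eq:me-corr-tilt-shift} to express $u^{\inp^m}$ as a phase-modulated shifted copy of $u^{\inp^n}$ with offset $\inp^m-\inp^n$. For complex-valued $I_t$ I use the natural complex-conjugated extension of $I_t\star I_t$, i.e.\ the standard generalization of the real definition \equref{eq:auto-corr} to complex signals.

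For a matching pair, the ME relation is applied twice, once at $u^{\inp^m}(\snsp+\Dl)$ and once at $u^{\inp^m}(\snsp+\Dl+\pt)$. The two tilt phases are of the form $e^{\pm ik\alpha\langle\Dl,\snsp-\inp^n\rangle}$ and $e^{\pm ik\alpha\langle\Dl,\snsp+\pt-\inp^n\rangle}$; their $\snsp$-dependent and source-position-dependent parts cancel because $\pt$ is a fixed shift, leaving a single $\snsp$-independent ramp $e^{-jk\alpha\langle\Dl,\pt\rangle}$ (its overall sign being fixed by the conjugation convention). The surviving field factors collapse to $|u^{\inp^n}(\snsp)|^2|u^{\inp^n}(\snsp+\pt)|^2$, whose $\snsp$-sum is a positive constant essentially independent of $n$ by the spatial stationarity of speckle. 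Summing over all matching pairs multiplies this constant by the number of source pairs at displacement $\Dl$, which is precisely $\bC^O(\Dl)$, yielding the stated approximation.

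For the non-matching pairs $(\inp^m-\inp^n\neq\Dl)$ the summand no longer collapses to squared magnitudes; it involves four field factors at genuinely different spatial arguments and therefore behaves as a complex, zero-mean quantity across the speckle pixels. Its $\snsp$-sum is suppressed by the same averaging mechanism that gives the near-impulse behavior of $\bSo\star\bSo$ in \equref{eq:impulse-assump}, with residual variance inversely proportional to the number of speckle grains summed. This non-matching estimate is the principal obstacle: because $S^{\inp^n}_t$ is no longer a nonnegative intensity but a complex zero-mean signal, the intensity-case argument does not transfer directly. I would address it by evaluating the appropriate fourth-order correlator of the circular-Gaussian fields $u^{\inp^n}$ (as in supplement Sec.~A.A for the intensity case) and verifying that misaligned configurations have vanishing expectation and small enough variance so that the matching-pair contribution dominates whenever $\Dl\in\Listdl$.
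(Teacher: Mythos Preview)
Your proposal is correct and reaches the same conclusion, but the route differs from the paper's in a structurally meaningful way.

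The paper does not expand the double sum over source pairs. Instead it first proves an intermediate lemma (\clmref{clm:ramp} in the supplement): applying the tilt--shift relation \equref{eq:me-corr-tilt-shift} twice shows that $S^{\inp^n}_t(\snsp)\approx e^{-jk\alpha\langle\inp^n,\pt\rangle}\,\So_t(\snsp+\inp^n)$, i.e., the interferometric signals from different sources are pure shifts of $\So_t$ up to a \emph{global} phasor depending only on the source position. This immediately yields a convolution factorization $I_t\approx \So_t\ast \tilde O$ with $\tilde O(\snsp)=O(\snsp)\,e^{-jk\alpha\langle\snsp,\pt\rangle}$. The paper then invokes the standing impulse assumption $\So_t\star\So_t\approx\delta$ once (exactly as in \equref{eq:impulse-assump}) to get $\bC^{I_t}\approx \bC^{\tilde O}$, and finishes with the Fourier shift theorem: since $\tilde O$ is $O$ times a linear phase, $|\fourier(\tilde O)|^2$ is a translated copy of $|\fourier(O)|^2$, hence $\bC^{\tilde O}(\Dl)=e^{-jk\alpha\langle\Dl,\pt\rangle}\bC^O(\Dl)$.

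Your matching-pair calculation is exactly the pixel-level mechanism underlying the paper's lemma; the phase cancellation you exhibit, leaving $|u^{\inp^n}(\snsp)|^2|u^{\inp^n}(\snsp+\pt)|^2\,e^{-jk\alpha\langle\Dl,\pt\rangle}$, is correct and is in fact a direct verification of \clmref{clm:ramp} specialized to the displacement $\Dl$. Where the approaches diverge is the handling of the noise: you treat non-matching pairs explicitly via a fourth-order Gaussian moment argument, whereas the paper absorbs all of that into the single reusable assumption $\So_t\star\So_t\approx\delta$ (which it has already justified for intensity speckles) together with the convolution structure. The paper's route is shorter and cleanly isolates the phase-ramp result as a Fourier identity on $\tilde O$; your route is more self-contained and makes the signal/noise separation explicit without the Fourier detour, at the cost of having to re-derive, for the complex-valued $S_t$, the suppression that the impulse assumption would otherwise give for free.
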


Given the relation in \clmref{claim:trans-interf-auto-corr}, we average the auto-correlation of the different \AlgName measurements, applying the  phase ramp correction of \equref{eq:phase-ramp-corr}:
\BE\label{eq:avg-auto-corr-ramp}
\bnbC=\avgt e^{jk\alpha<\Dl,\pt>} \bC^{I_t}(\Dl)
\EE
The phase corrected averaging in \equref{eq:avg-auto-corr-ramp} is subject to a single unknown parameter $\alpha$. In our implementation, we manually tune it to maximize the visual quality of the results.

\paragraph{Analyzing \AlgName  correlations.}
In \figref{fig:autocorrelation}(d) we show the auto-correlation obtained by averaging  \AlgName measurements $I_t$ (\equref{eq:intef-It}) with the phase  ramp  correction of \equref{eq:avg-auto-corr-ramp}.  Our approach reduces noise and improves the correlation contrast when compared with random modulations (\equref{eq:randS}) or just with the auto-correlation of a single speckle image.

In \figref{fig:autocorrelation}(d), we average $18$ \AlgName measurements $I_t$.  Note that as we explain in supplement Sec. A.C, 
capturing each interferometric measurement $I_t$  requires $3$ shots, so the $18$ measurements in \figref{fig:autocorrelation}(d-e) are acquired using a total of $54$ shots. This is compared against $54$ independent measurements captured by the random modulation approach. The $18$ \AlgName  modulations are  superior over the $54$ random modulations.

\begin{figure}[t]
     \centering
     \begin{subfigure}[b]{0.22\textwidth }
         \centering
         \includegraphics[width = \textwidth]{./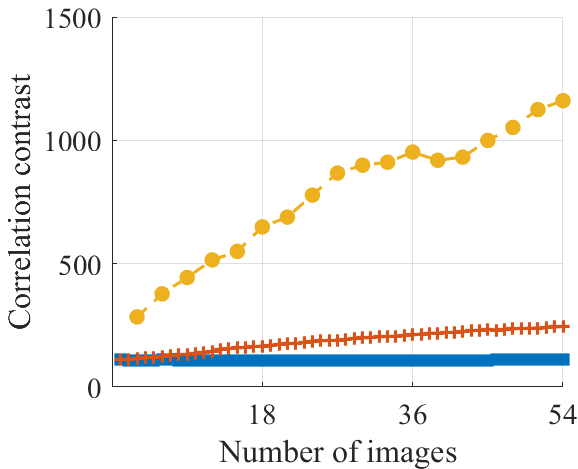}
     \end{subfigure}
     \begin{subfigure}[b]{0.22\textwidth}
         \centering
         \includegraphics[width = \textwidth]{./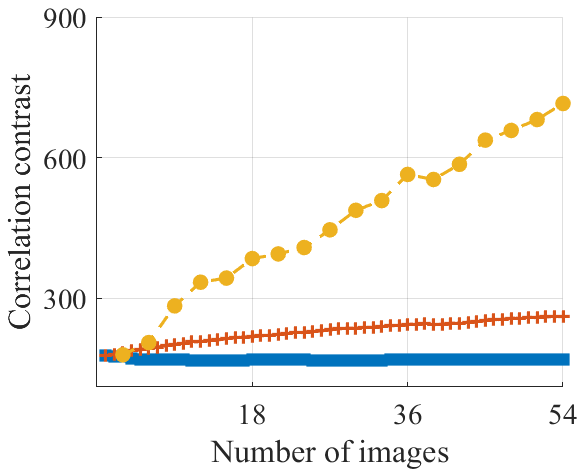}
     \end{subfigure}
    \newline
     \begin{subfigure}[b]{0.45\textwidth}
         \centering
         \includegraphics[width = \textwidth]{./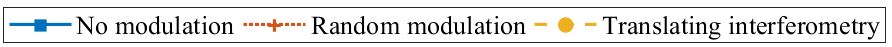}\\
     \end{subfigure}
\caption{\textbf{Contrast improvement for different modulations.} We compare speckle correlation contrast as a function of the number of images for different modulation approaches. The two images show correlation contrast measurements for two different tissue slices. The graphs are noisy due to the finite speckle spread in an image, but still demonstrate clear differences between different modulation approaches. Without modulation, multiple images only reduce read and photon noise of the image, which does  not lead to a significant improvement in contrast. Multiple images with random modulations can improve contrast, but the gain saturates quickly as ME degrades. 
On the other hand, \AlgName can achieve a higher contrast, and, in agreement with theory, contrast increases roughly linearly with the number of shots.
}
\label{fig:compScontrast}
\end{figure}

\begin{table}[t]
\small
\caption{\label{tab:compSsumm} \textbf{Summary of different speckle modulation approaches.}}
\centering
\begin{tabular}{cc@{\hskip -0.1em}c}
\toprule
                       & $S^{\inp^n}_t$                                   & \small{Averaged auto-correlation}\\
No modulation \cite{SeeThroughSubmission}          & $\abs{u^{\inp^n}(\snsp)}^2$                      & $ \bC^{\bI_t}(\Dl) $\\
Random modulation  & $\abs{u^{\inp^n}(\snsp) \ast h_t}^2$             & $\avgt \bC^{\bI_t}(\Dl) $\\
Translating interf.  & $u^{\inp^n}(\snsp)u^{\inp^n}(\snsp+ \pt)^*$        & $\avgt e^{jk\alpha<\Dl,\pt>} \bC^{I_t}(\Dl)$\\
\bottomrule
\end{tabular}
\end{table}

We note that the \AlgName measurements used here are similar to
those used in shearing interferometry \cite{riley1977laser}.
However, shearing interferometry usually uses smaller displacements to obtain the local gradient of the wave, while the displacements we use here are larger than the speckle grain size so that we obtain uncorrelated speckles.

Table \ref{tab:compSsumm} summarizes the different modulation approaches evaluated in this paper.
In \figref{fig:compScontrast} we plot the correlation contrast of \equref{eq:contrast} as a function of the number of averaged images $T$.
 We start by capturing multiple images without any modulation. This only reduces  read and photon noise, which  does not translate into a real improvement in correlation contrast. When we randomly modulate the wave (\equref{eq:randS}), the contrast increases but it eventually saturates as the convolution reduces the ME correlation.  By using our {\AlgName} (\equref{eq:ourS}), the contrast increases roughly linearly as predicted by the theory.
 This suggests that the speckle signals $S_t$ we generate are indeed uncorrelated for different displacements $\opt,\tpt$.
The graphs in  \figref{fig:compScontrast} demonstrate correlation contrast observed with two different tissue slices. For each of the tissue slices, we generated  the source layout in the top row of   \figref{fig:autocorrelation}.
As we evaluate speckles through real tissue, we note that: 1) the exact amount of correlation we measure in each tissue slice can vary, and 2) each tissue layer generates a speckle spread with a limited support. As we only average a finite number of speckle pixels, the graphs are noisy.
Despite these issues,  the graphs measured from difference slices demonstrate consistent trends.

\subsection{Exploiting local support}\label{sec:local-sup}
The previous section aims to increase the  auto-correlation contrast by averaging multiple measurements.
To improve on it, we adopt a complementary approach for noise reduction, recently suggested by \cite{SeeThroughSubmission}. This approach is based on the observation that when the light sources are inside the sample, rather than far behind it, the speckle pattern scattered from a single source has {\em local support}, \textit{i.e.}, the scattered light does not spread over the entire sensor.
Therefore, it is argued that computing the full-frame auto-correlation over the entire image corrupts the signal with additional noise. Rather,  it is sufficient to match between the local correlations of the observed speckle pattern and the optimized latent image. This leads into a Ptychography style cost~\cite{PhysRevLett.98.034801}.
We review the exact cost in supplement Sec. A.B.
In the experimental section below, we show that moving from full-frame correlations to local ones has a major impact on noise elimination and improving the resulting reconstruction.

Another advantage of the local cost discussed in \cite{SeeThroughSubmission}, is that it allows recovering patterns larger than the extent of the ME. As mentioned above, ME correlations of the form of \equref{eq:me-corr-tilt-shift} only hold for small displacements $\Dl$. When matching the full-frame auto-correlation (\equref{eq:autocorrelation}) of $I$ and $O$, we rely on the fact that ME correlation exists between any two sources in our latent pattern. This assumption largely limits the  range of recoverable illuminator patterns to patterns lying within the ME range. In contrast,
the local cost only relies on local correlations between sources  in the same local window.
At the same time, the overall extent of the illuminator pattern $O$ can be larger than these local windows.

\section{Results}
\label{sec:exp}
\subsection{Experiment setup}
\figref{fig:setup} illustrates our acquisition setup, including an imaging arm and a validation arm.
The imaging arm consists of an objective and a tube lens, followed by a second relay system which allows us to place a spatial light modulator (SLM) at the Fourier plane. The image of the modulated field is collected by the main camera.
The objective attempts to image incoherent sources beyond a scattering sample.
A second validation camera images the beads from the other end of the tissue, allowing the capture of a clear unscattered  image of the illuminator layout, which is used  to assess reconstruction quality. Note that this validation camera does not provide any input to the algorithm.
The target and the validation objectives are mounted on z-axis translation stages, facilitating accurate  control over focusing in both imaging and validation arms.

\begin{figure}[t]
\centering
\includegraphics[width=25em]{./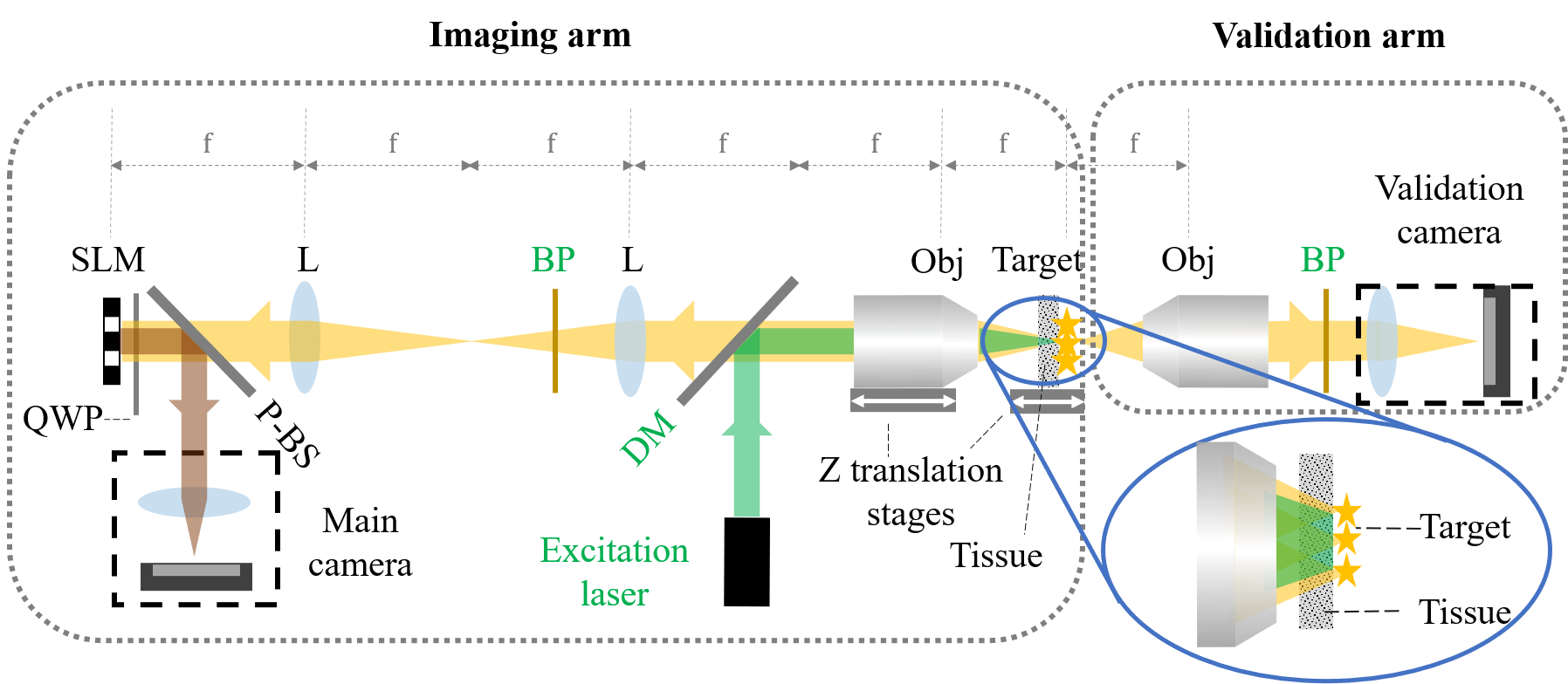}
\caption{\label{fig:setup}\textbf{Experimental setup.} L: lens. Obj: objective lens. BP: bandpass filter. QWP: quarter wave plate. P-BS: polarized beam splitter. DM: dichromic mirror. 
		We attach fluorescent beads at the back of a tissue layer. A laser excites the beads from the  front side of the tissue. The emitted light is back scattered through the tissue and collected by the main camera. We create a 4f relay system in the optical path and place an SLM in its Fourier plane to modulate the scattered light.
		Finally, we used a validation camera behind the tissue which can image the beads directly. This camera is not part of our algorithm and is only used to validate its output.}
\end{figure} 

For most of our experiments, we used chicken breast tissue as a scattering sample. In the supplement, we also demonstrate results imaging through a parafilm tissue phantom. We discuss what is known about the optical characterization of these materials in supplement Sec. A.H.
We  used Spherotech Fluorescent Nile Red Particles $0.4-0.6\mu m$, FP-0556-2.  The beads are attached on a microscope cover glass behind the scattering tissue. The separation between the beads and the tissue is as low as $150\mu m$, the thickness of the cover glass. The beads are excited with a 530nm laser from the front side of the tissue. The excitation light scatters through the tissue, illuminates the beads, and the emitted light scatters back through the tissue to the camera. We filtered the excited light using a 10nm bandpass filter centered at $580$nm.


Throughout the experimental section, we visualize the fluorescent bead captures with a green colormap.
We use a red colormap to visualize images from an alternative setup described in supplement Sec. A.D. Such images are captured through real tissue but the source layout  is computer-synthesized.

\begin{figure*}[t!]
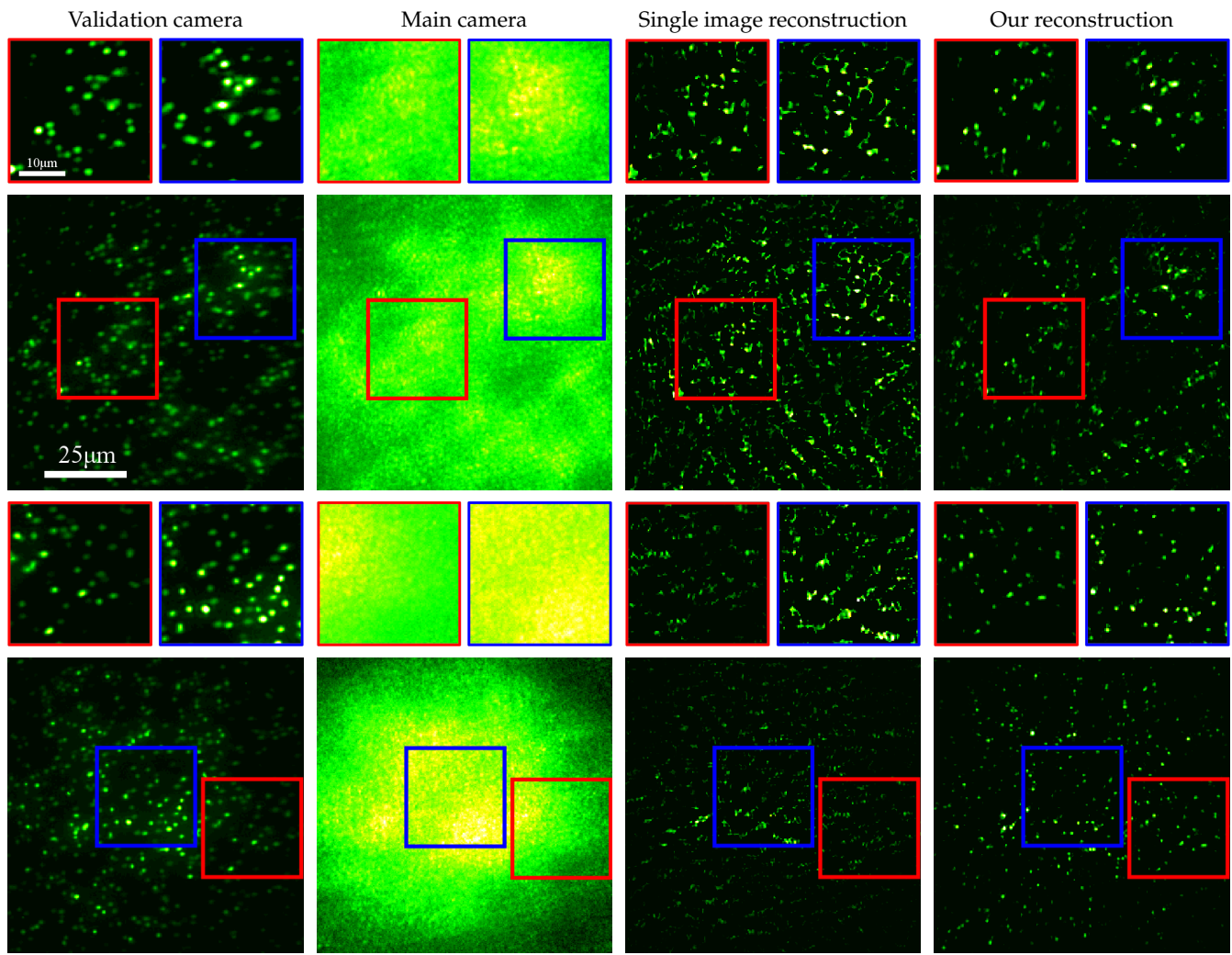

\centering
\begin{tabular}{*4{X}@{}}
Validation camera & Main camera & Single image reconstruction & Our reconstruction\\
\comparepics{./figure/fluorescent/150um_dense2_redo_part} \\
\comparepics{./figure/fluorescent/150um_dense2_redo_marked}\\
\comparepics{./figure/fluorescent/150um_dense_redo_part} \\
\comparepics{./figure/fluorescent/150um_dense_redo_marked}
\end{tabular}
\caption{\textbf{Reconstruction results for fluorescent beads behind \textasciitilde$150\mu m$-thick tissue slices.} With \AlgName, we can clearly reconstruct fluorescent bead targets from $54$ shot images captured by the main camera. The reconstruction is compared against a reference image from the validation camera observing the beads directly. In contrast, a standard single-image shot of the scattered light only facilitates a very noisy reconstruction. }
\label{fig:showfluorescent}
\end{figure*}

\begin{figure*}[h!]
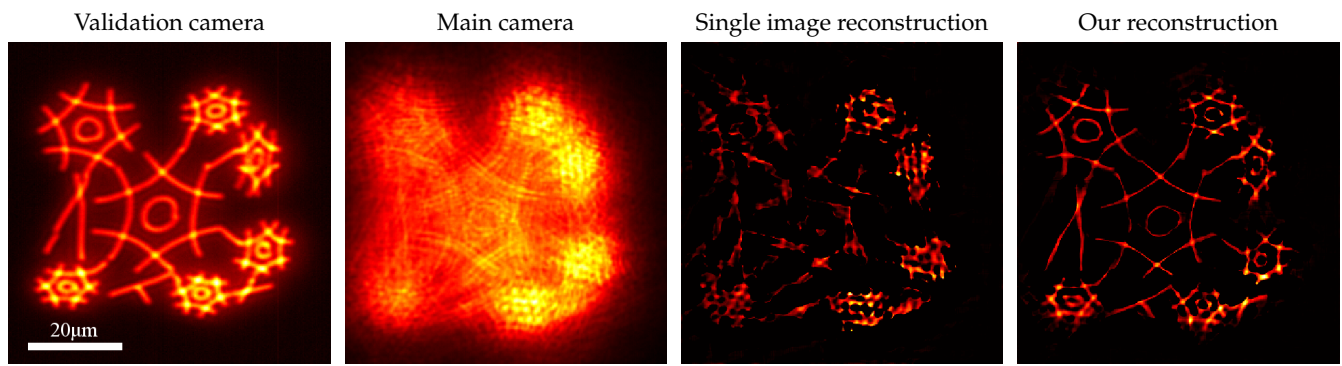

\centering
\begin{tabular}{*4{X}@{}}
Validation camera & Main camera & Single image reconstruction & Our reconstruction\\
\comparepics{./figure/nearfield/150um/neural_network}
\end{tabular}
\caption{\textbf{Reconstruction results from a spatially incoherent target.} We reconstruct a spatially incoherent target with a more complicated layout behind \textasciitilde$150\mu m$-thick tissue slice.}
\label{fig:showlaser}
\end{figure*}

%
%


\subsection{Experiment results}
We start by demonstrating our setup on a fluorescent bead target.
 In \figref{fig:teaser}, we reconstruct a $1000\times 1000$ pixel image, corresponding to a field of view of $300{\mu}m\times 300{\mu}m $. The random beads were spread behind a \textasciitilde$150{\mu}m$ thick tissue. Note that all thickness measurements in this paper  are approximated due to the limited resolution of the  clipper. The bead layout is unrecognizable from the captured speckle input. Moreover,  as so many independent incoherent sources are present, the input images are rather smooth and speckle variation is almost invisible.
  Despite this, our {\AlgName} framework achieves a clear reconstruction from $54$ shots.
 \figref{fig:showfluorescent} demonstrates another reconstruction of beads behind a \textasciitilde$150{\mu}m$ thick tissue.
\figref{fig:showlaser} demonstrates a reconstruction of structured patterns  rather than  sparse beads. Details about this target are provided in supplement Sec. A.D.

 While the reference and the reconstruction have the same layout they have  somewhat different brightness and resolution. The resolution of the reference is subject to the diffraction limit. The reconstruction algorithm on the other hand encourages sparse results, and hence, recovered dots tend to be narrower.
 The brightness variation is partially attributed to the fact that the reference is captured by a different camera from a different direction, but also due to imperfect convergence of our optimization.



\begin{figure*}[!t]
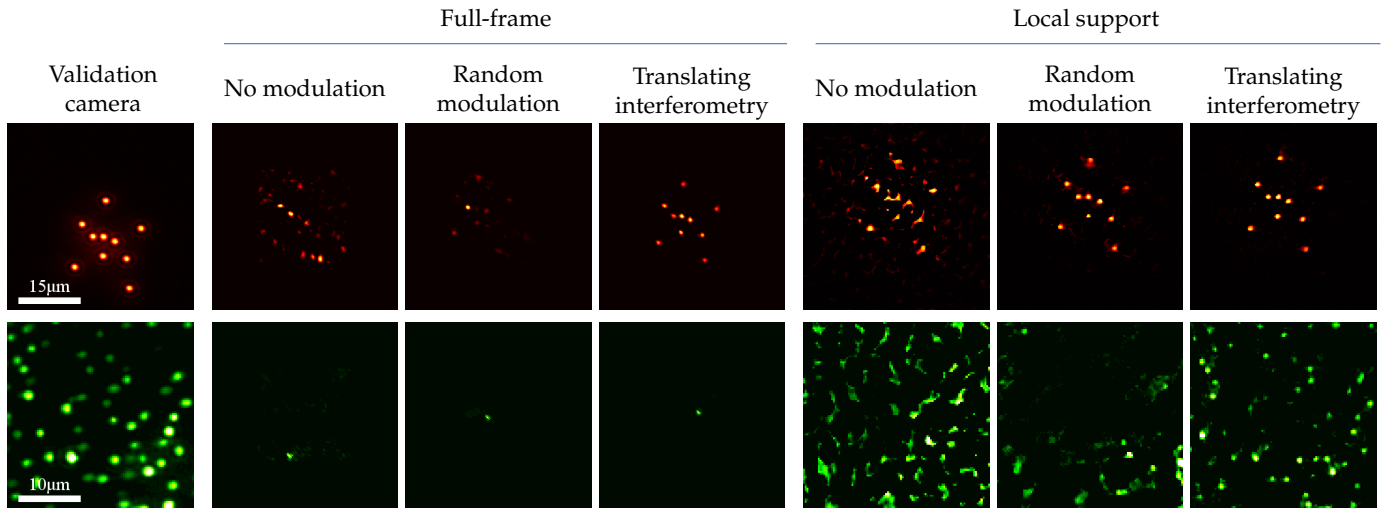

\centering
\begin{tabular}{S@{\hskip 1.8em}SSS@{\hskip 1.8em}SSS}
 \multirow{2}[13]{*}{ \makecell{Validation\\camera}}&  \multicolumn{3}{c}{ \shifttext{-2em}{Full-frame}}  & \multicolumn{3}{c}{ \shifttext{-1em}{Local support}} \\ \cmidrule(r{1.2em}){2-4} \cmidrule{5-7}
 &  No modulation & Random modulation & \makecell{Translating\\ interferometry}  & No modulation & Random modulation & \makecell{Translating\\ interferometry}\\
\comparealg{./figure/compare/collisionfree_ffrecon}{./figure/compare/collisionfree_recon} \\
\comparealg{./figure/compare/fluorescent_ffrecon_redo}{./figure/compare/fluorescent_recon_redo}
\end{tabular}
\caption{\textbf{Comparing  reconstruction and modulation approaches.} Top part: a spatially incoherent target with a sparse and simple layout. Lower part: challenging fluorescent beads target. Local support correlations~\cite{SeeThroughSubmission} are stronger than standard full-frame auto-correlations~\cite{Chang2018}, and our {\AlgName} improves over a single shot (with no modulation) and over simple random modulation. For the simple target on the top, the full-frame algorithm can recover the image given the improved correlation provided by {\AlgName} modulations, but fails to do so from the noisier correlations provided by other modulation strategies. The local correlation approach which is more robust to noise can recover the target even with the simpler modulations.  For the challenging target at the lower part, the full-frame algorithm fails completely using all types of modulations. The local correlations algorithm can reconstruct the target using   {\AlgName} modulations. However given random modulations, it can only reconstruct a subset of the beads.}
\label{fig:compalg}
\end{figure*}

\subsection{Comparing reconstruction and modulation approaches}
In \figref{fig:compalg}, we evaluate two components of our algorithm: (i) the usage of local correlations~\cite{SeeThroughSubmission} discussed in \secref{sec:local-sup}, versus the standard full frame auto-correlation used in previous work~\cite{Katz2014,Bertolotti2012,Chang2018}, and
 (ii) the modulation approach. 

 As discussed in~\cite{SeeThroughSubmission} and reviewed in \secref{sec:local-sup}, the local approach detects correlations with a higher SNR compared to the full frame approach and indeed it leads to better reconstructions.
 We also show that the \AlgName modulation leads to better results compared to simpler modulation alternatives.

 \figref{fig:compalg} compares different modulation schemes, as well as evaluates the effectiveness of full-frame and local-correlation algorithms.
 The top row of the figure shows reconstructions of a spatially-incoherent target with a simple structure.
For the full frame approach, a single shot results in  unrecognizable reconstruction, that is only slightly improved given $54$ random modulations.
In contrast, our {\AlgName}  can correctly reconstruct the pattern.
The usage of local correlations compared to full-frame ones reduces some of the noise, and hence  even  random modulations can lead to good reconstruction.

The bottom row of the figure compares reconstructions on  a  denser  fluorescent bead target. As explained in~\cite{SeeThroughSubmission}, the increased source density is more challenging to reconstruct as the contrast of the incoherent speckle image decreases.
The full-frame approach fails to reconstruct this target with any of the modulation approaches.
The local correlation algorithm fails with a single shot (no modulation). The random modulation reconstructed only a subset of the beads, and the best results are obtained using {\AlgName} modulations.

\subsection{Evaluating reconstruction vs. number of images}
\figref{fig:compScontrast} numerically evaluates the correlation contrast improvement as a function of  the number of images.
In  \figref{fig:compimgnumrecon}, we visually compare reconstructions using an increasing number of input images, and demonstrate how the improved contrast translates into better reconstruction quality.
This experiments uses the same bead layout as in the lower part of  \figref{fig:compalg}, please refer to the validation camera reference displayed there.

%
%

\begin{figure}[t]
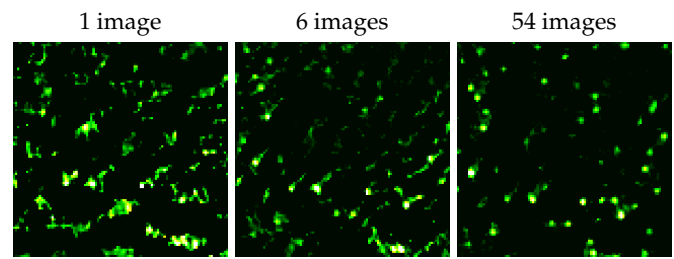

\centering
\hspace{-1em}
\begin{tabular}{*3{M}@{}}
1 image & 6 images & 54 images\\
\compareimagenum{1} & \compareimagenum{6} & \compareimagenum{54}
\end{tabular}
\caption{\textbf{Evaluating reconstruction vs.  number of shots}. We visualize the quality of the reconstruction given no modulation (single shot), and with an increasing number of modulation patterns. As evaluated numerically in \figref{fig:compScontrast} the correlation contrast increases linearly with the number of shots and in agreement, the visual quality of the reconstruction improves.   The reference speckle layout in this experiment is equivalent to the  validation camera image in the lower part of \figref{fig:compalg}.}
\label{fig:compimgnumrecon}
\end{figure}




\section{Discussion}

In this work, we demonstrate the reconstruction of fluorescent illuminator patterns attached to a scattering tissue. Despite heavy scattering that completely distort the captured images, we can exploit ME correlations in the measured speckles to detect the hidden illuminator layout.
While such correlations are inherently weak, we suggest a modulation scheme which allows us to capture multiple uncorrelated speckle measurements of the same sources. By averaging the correlations of such measurements, we increase the signal to noise ratio in the data and largely boost the reconstruction quality. We combine these modulated measurements with a recent algorithm~\cite{SeeThroughSubmission} seeking a latent pattern whose local correlations agree with local correlations in the measured data. The local correlations provide additional improvement in SNR. Moreover, since it only assumes local correlations between the speckles emitted by nearby sources rather than global full-frame correlations between any two sources in the image, it allows us to reconstruct wide patterns, much beyond the limited extent of the ME.
Overall, we demonstrate the reconstruction of mega-pixel wide patterns, limited only by the sensor size.

Despite the advances offered by this approach, it is inherently dependent on the existence of some ME correlation; thus when such correlations are too weak to be measured, our approach will fail as well.
Two major factors that decrease the ME correlations are the tissue thickness and the number of independent incoherent sources. In our current implementation, we recovered sources behind $150{\mu}m$-thick tissue, which is beyond the penetration depth of a standard microscope. While this significantly advances the capabilities of a standard microscope, numerous biomedical applications would benefit from increasing this depth further.
%
The second challenge for ME-based correlations is that we assume speckle variation is observed in the captured data. As more and more independent sources are present, the speckle contrast decays and the captured intensity images are smoother. Naturally, when speckle contrast is lower than the photon noise in the data, no correlations can be detected. Hence ME-based techniques are inherently limited to simple, sparse illuminator layouts.

An alternative approach for seeing through scattering tissue is based on wavefront shaping optics. Rather than post-process the speckle data, it attempts to modulate the incoming excitation light and/or the outgoing emission, to undo the tissue aberration in optics. In theory, this approach carries the potential to extend to thicker tissue layers and to correct complex patterns. In practice, efficiently finding a proper modulation mask is a challenging task. Recently, \cite{Dror22} has managed to recover such a modulation efficiently using linear (single photon) fluorescent feedback from a sparse set of beads. However, even after recovering a good modulation mask, the area they could correct with it is limited due to the limited extent of the ME. In contrast, our approach can recover wide-field-of-view, full-frame images, as it only relies on local ME correlations.

\begin{backmatter}
\bmsection{Disclosures} The authors declare no conflicts of interest.

\bmsection{Supplemental document}
See Supplement 1 for supporting content.

\end{backmatter}

\bibliography{biblio_anat,references}

\begin{thebibliography}{10}
\newcommand{\enquote}[1]{``#1''}

\bibitem{Katz2014}
O.~Katz, P.~Heidmann, M.~Fink, and S.~Gigan, \enquote{Non-invasive single-shot
  imaging through scattering layers and around corners via speckle
  correlation,} {\protect\JournalTitle{Nature Photonics}}  (2014).

\bibitem{Bertolotti2012}
J.~Bertolotti, E.~G. van Putten, C.~Blum, A.~Lagendijk, W.~L. Vos, and A.~P.
  Mosk, \enquote{Non-invasive imaging through opaque scattering layers,}
  (2012).

\bibitem{Takasaki:14}
K.~T. Takasaki and J.~W. Fleischer, \enquote{Phase-space measurement for
  depth-resolved memory-effect imaging,} {\protect\JournalTitle{Optical
  Express}}  (2014).

\bibitem{Edrei:16}
E.~Edrei and G.~Scarcelli, \enquote{Optical imaging through dynamic turbid
  media using the fourier-domain shower-curtain effect,}
  {\protect\JournalTitle{Optica}}  (2016).

\bibitem{Edrei2016}
E.~Edrei and G.~Scarcelli, \enquote{Memory-effect based deconvolution
  microscopy for super-resolution imaging through scattering media,}
  {\protect\JournalTitle{Scientific Reports}}  (2016).

\bibitem{Hofer:18}
M.~Hofer, C.~Soeller, S.~Brasselet, and J.~Bertolotti, \enquote{Wide field
  fluorescence epi-microscopy behind a scattering medium enabled by speckle
  correlations,} {\protect\JournalTitle{Opt. Express}}  (2018).

\bibitem{Wu:17}
T.~Wu, J.~Dong, X.~Shao, and S.~Gigan, \enquote{Imaging through a thin
  scattering layer and jointly retrieving the point-spread-function using
  phase-diversity,} {\protect\JournalTitle{Opt. Express}}  (2017).

\bibitem{Wu:20}
T.~Wu, J.~Dong, and S.~Gigan, \enquote{Non-invasive single-shot recovery of a
  point-spread function of a memory effect based scattering imaging system,}
  {\protect\JournalTitle{Opt. Lett.}}  (2020).

\bibitem{Wang:20}
X.~Wang, X.~Jin, and J.~Li, \enquote{Blind position detection for large
  field-of-view scattering imaging,} {\protect\JournalTitle{Photon. Res.}}
  (2020).

\bibitem{Chang2018}
J.~Chang and G.~Wetzstein, \enquote{Single-shot speckle correlation
  fluorescence microscopy in thick scattering tissue with image reconstruction
  priors,} {\protect\JournalTitle{Journal of Biophotonics}}  (2018).

\bibitem{SeeThroughSubmission}
M.~Alterman, C.~Bar, I.~Gkioulekas, and A.~Levin, \enquote{Imaging with local
  speckle intensity correlations: theory and practice,}
  {\protect\JournalTitle{ACM TOG}}  (2021).

\bibitem{zhou2019retrieval}
M.~Zhou, A.~Pan, R.~Li, Y.~Liang, J.~Min, T.~Peng, C.~Bai, and B.~Yao,
  \enquote{Retrieval of non-sparse object through scattering media beyond the
  memory effect,} {\protect\JournalTitle{J. Opt.}}  (2020).

\bibitem{Gardner:19}
D.~F. Gardner, S.~Divitt, and A.~T. Watnik, \enquote{Ptychographic imaging of
  incoherently illuminated extended objects using speckle correlations,}
  {\protect\JournalTitle{Appl. Opt.}}  (2019).

\bibitem{Li:2019:Ptycho}
G.~Li, Y.~Wanqin, W.~Haichao, and G.~Situ, \enquote{Image transmission through
  scattering media using ptychographic iterative engine,}
  {\protect\JournalTitle{Applied Sciences}}  (2019).

\bibitem{Li:B:2019:Ptycho}
L.~Li, X.~Wen, R.~Song, J.-T. Jiang, H.-L. Zhang, X.-B. Liu, and L.~Wei,
  \enquote{Imaging correlography using ptychography,}
  {\protect\JournalTitle{Applied Sciences}}  (2019).

\bibitem{Shekel_2020}
N.~Shekel and O.~Katz, \enquote{Using fiber-bending-generated speckles for
  improved working distance and background rejection in lensless
  micro-endoscopy,} {\protect\JournalTitle{Optics Letters}} \textbf{45},
  4288--4291 (2020).

\bibitem{Boniface2020}
A.~Boniface, J.~Dong, and S.~Gigan, \enquote{Non-invasive focusing and imaging
  in scattering media with a fluorescence-based transmission matrix,}
  {\protect\JournalTitle{Nature Communication}}  (2020).

\bibitem{Zhu22}
L.~Zhu, F.~Soldevila, C.~Moretti, A.~d'Arco, A.~Boniface, X.~Shao, H.~B.
  de~Aguiar, and S.~Gigan, \enquote{Large field-of-view non-invasive imaging
  through scattering layers using fluctuating random illumination,}
  {\protect\JournalTitle{arXiv preprint arXiv:2017.08158}}  (2021).

\bibitem{Boniface:19}
A.~Boniface, B.~Blochet, J.~Dong, and S.~Gigan, \enquote{Noninvasive light
  focusing in scattering media using speckle variance optimization,}
  {\protect\JournalTitle{Optica}}  (2019).

\bibitem{Dror22}
D.~Aizik, I.~Gkioulekas, and A.~Levin, \enquote{Fluorescent wavefront shaping
  using incoherent iterative phase conjugation,} {\protect\JournalTitle{arXiv
  preprint}}  (2022).

\bibitem{osnabrugge2017generalized}
G.~Osnabrugge, R.~Horstmeyer, I.~N. Papadopoulos, B.~Judkewitz, and I.~M.
  Vellekoop, \enquote{Generalized optical memory effect,}
  {\protect\JournalTitle{Optica}}  (2017).

\bibitem{single-sct-iccp-21}
C.~Bar, M.~Alterman, I.~Gkioulekas, and A.~Levin, \enquote{Single scattering
  modeling of speckle correlation,} in \emph{ICCP,}  (2021).

\bibitem{riley1977laser}
M.~Riley and M.~Gusinow, \enquote{Laser beam divergence utilizing a lateral
  shearing interferometer,} {\protect\JournalTitle{Applied Optics}}
  \textbf{16}, 2753--2756 (1977).

\bibitem{PhysRevLett.98.034801}
J.~M. Rodenburg, A.~C. Hurst, A.~G. Cullis, B.~R. Dobson, F.~Pfeiffer, O.~Bunk,
  C.~David, K.~Jefimovs, and I.~Johnson, \enquote{Hard-x-ray lensless imaging
  of extended objects,} {\protect\JournalTitle{Phys. Rev. Lett.}} \textbf{98}
  (2007).

\bibitem{Kingma2014AdamAM}
D.~P. Kingma and J.~Ba, \enquote{Adam: A method for stochastic optimization,}
  {\protect\JournalTitle{ICLR}}  (2015).

\bibitem{hariharan1987digital}
P.~Hariharan, B.~Oreb, and T.~Eiju, \enquote{Digital phase-shifting
  interferometry: a simple error-compensating phase calculation algorithm,}
  {\protect\JournalTitle{Applied Optics}} \textbf{26}, 2504--2506 (1987).

\bibitem{schott2015characterization}
S.~Schott, J.~Bertolotti, J.-F. L{\'e}ger, L.~Bourdieu, and S.~Gigan,
  \enquote{Characterization of the angular memory effect of scattered light in
  biological tissues,} {\protect\JournalTitle{Optics express}} \textbf{23},
  13505--13516 (2015).

\bibitem{boniface2019noninvasive}
A.~Boniface, B.~Blochet, J.~Dong, and S.~Gigan, \enquote{Noninvasive light
  focusing in scattering media using speckle variance optimization,}
  {\protect\JournalTitle{Optica}} \textbf{6}, 1381--1385 (2019).

\end{thebibliography}

\bibliographyfullrefs{references}

\ifthenelse{\equal{\journalref}{aop}}{%
\section*{Author Biographies}
\begingroup
\setlength\intextsep{0pt}
\begin{minipage}[t][6.3cm][t]{1.0\textwidth} 
  \begin{wrapfigure}{L}{0.25\textwidth}
    \includegraphics[width=0.25\textwidth]{john_smith.eps}
  \end{wrapfigure}
  \noindent
  {\bfseries John Smith} received his BSc (Mathematics) in 2000 from The University of Maryland. His research interests include lasers and optics.
\end{minipage}
\begin{minipage}{1.0\textwidth}
  \begin{wrapfigure}{L}{0.25\textwidth}
    \includegraphics[width=0.25\textwidth]{alice_smith.eps}
  \end{wrapfigure}
  \noindent
  {\bfseries Alice Smith} also received her BSc (Mathematics) in 2000 from The University of Maryland. Her research interests also include lasers and optics.
\end{minipage}
\endgroup
}{}
\clearpage
\appendix
\pagenumbering{arabic}

\begin{strip}
	\noindent{\huge {\bf Enhancing Speckle Statistics for Imaging Inside
Scattering Media}}\\\\
	\noindent{\Large {\bf Supplementary Appendix}}
\end{strip}

\renewcommand{\theequation}{S\arabic{equation}}
\renewcommand{\thefigure}{S\arabic{figure}}    

\setcounter{equation}{0}
\setcounter{figure}{0}

\section{Appendix}
\subsection{Analyzing correlation properties in \AlgName measurements}
In this section, we formulate and prove a few properties of \AlgName modulations mentioned in the main paper.

Throughout this derivation, we assume that different pixels $\snsp$ in a speckle pattern $u^{\inp^n}(\snsp)$ arising from one source are independent random variables. In practice, depending on magnification, the  grain of speckle features on the sensor may be wider than a single pixel. This would imply some scaling adjustments in the exact formulas, which we neglect here.
We also ignore small dependencies introduced by the low pass filtering of \equref{eq:zero-mean} and treat entries of $\bS^{\inp^n}(\snsp)$ as independent random variables.

Also, in the following derivations expectations
are taken with respect to multiple realization of scattering volumes with the same material parameters.
For example, the notation  $\E{\bC^{{I_1}}(\Dl) }$ refers to the following. Suppose we had $N$ different tissue layers of the same type, and we place behind each layer fluorescent sources at the exact same layout.
We image $N $ different speckle images and compute $N$ different auto-correlations  $\bC^{{I_1}}(\Dl) $. For large $N$ values averaging these auto-correlations provides the idealized expected speckle-auto correlation for that source layout, denoted $\E{\bC^{{I_1}}(\Dl) }$.
Similarly $\Var{\left[\bC^{{I_1}}(\Dl)\right] }$ denotes the variance  we expect to see in such auto-correlation. As correlation is computed from a finite number of speckle pixels the correlation is never zero even in displacements $\Dl$ that do not correspond to an actual illuminator displacement.
To define variance mathematically one needs to compute such auto-correlations from  $N$ different speckle images of the same illuminator layout. In practice for the evaluation in \figref{fig:compScontrast} we only compute the expectation and variance between different $\Dl$ displacements of the auto-correlation of a single tissue sample.

\setcounter{claim}{0}

\begin{claim}\label{claim:cor-cont-linear}
	We define correlation contrast
	\BE\label{eq:contrast-app}
	\CorCont\left(\bnbC\right)=\frac{\frac{1}{\abs{\Listdl}}\sum_{\Dl\in \Listdl}\E{\bnbC(\Dl) }^2}{\frac{1}{\abs{\CListdl}} \sum_{\Dl\in \CListdl}\Var{\left[\bnbC(\Dl)\right]}}
	\EE
	
	If the speckle patterns $S^\bzero_t$ are uncorrelated with each other for different $t$ values, than the correlation contrast increases {\em linearly} with the number of measurements $T$.
\end{claim}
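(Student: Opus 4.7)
The plan is to substitute $\bnbC(\Dl)=\avgt \bC^{\bI_t}(\Dl)$ into the numerator and denominator of \equref{eq:contrast-app} separately and track how each scales with $T$. Because the numerator is a squared expectation while the denominator is a variance, the averaging affects the two pieces differently, and their ratio acquires the factor of $T$.

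First I would handle the numerator. By linearity of expectation together with the fact that the $T$ scattering realizations are drawn from the same statistics, $\E{\bnbC(\Dl)}=\avgt\E{\bC^{\bI_t}(\Dl)}=\E{\bC^{\bI}(\Dl)}$ for every $\Dl$. The outer average over $\Dl\in\Listdl$ then shows that the numerator of $\CorCont(\bnbC)$ matches the numerator of $\CorCont(\bC^{\bI})$ exactly and contributes no $T$-dependence.

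Next I would handle the denominator by expanding
\BE
\Var{\left[\bnbC(\Dl)\right]}=\frac{1}{T^2}\sum_{t_1,t_2}\left(\E{\bC^{\bI_{t_1}}(\Dl)\bC^{\bI_{t_2}}(\Dl)}-\E{\bC^{\bI_{t_1}}(\Dl)}\E{\bC^{\bI_{t_2}}(\Dl)}\right),
\EE
and showing the $t_1\neq t_2$ contributions vanish, which leaves $\Var{[\bnbC(\Dl)]}=\frac{1}{T}\Var{[\bC^{\bI}(\Dl)]}$. Writing $\bC^{\bI_t}(\Dl)=\sum_\snsp \bI_t(\snsp)\bI_t(\snsp+\Dl)$ and invoking the joint statistical independence of speckle values coming from different $t$ allows each four-point cross expectation to factor into a product of two two-point ones, killing the off-diagonal covariance. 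Averaging over $\Dl\in\CListdl$ then makes the denominator of $\CorCont(\bnbC)$ a factor of $T$ smaller than that of $\CorCont(\bC^{\bI})$, and combining the two pieces yields $\CorCont(\bnbC)=T\cdot\CorCont(\bC^{\bI})$.

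The hard part will be upgrading the stated assumption that the $S^\bzero_t$ are uncorrelated into the factorization actually used in the cross term: the auto-correlation is fourth-order in the field (second-order in the intensity), and pairwise uncorrelatedness of the intensities alone is, strictly speaking, not enough to factor the required four-point moment. In the physical setting the different scattering realizations are genuinely statistically independent, either via distinct tissue slices as in \cite{Katz2014} or via the large-displacement \AlgName construction of \clmref{clm:uncorrelated}, and this independence is what lets the factorization go through. I would therefore state the proof under this stronger independence reading of ``uncorrelated'' and make the factorization step explicit at the point where the cross covariance is reduced to zero.
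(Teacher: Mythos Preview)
Your proposal is correct and follows essentially the same route as the paper: show the numerator is $T$-independent by linearity of expectation, then expand the denominator and kill the $t_1\neq t_2$ cross terms. The only cosmetic difference is that the paper first uses $\E{\bnbC(\Dl)}=0$ for $\Dl\in\CListdl$ to reduce the variance to a raw second moment before expanding, whereas you keep the covariance form throughout; both lead to the same $1/T$ scaling. Your explicit discussion of why pairwise uncorrelatedness of the $S_t^{\bzero}$ is formally too weak, and why one really relies on independence of the different realizations to factor the fourth-order cross moment, is in fact more careful than the paper's own step \equref{eq:indp-corr}, which invokes ``uncorrelated'' without further comment.
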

\begin{proof}
	The claim is based on the observation that for displacements $\Dl\in \Listdl$ that correspond to an actual illuminator's displacement, $\E{\bnbC(\Dl)}$ is a positive quantity, while for  $\Dl\in \CListdl$, which do not correspond to a displacement between two illuminators, no correlation exists and in expectation $\E{\bnbC(\Dl)}=0$.
	
	With this understanding, we note that expectation is linear and hence recalling the definition of $\bnbC(\Dl) $ in \equref{eq:avg-auto-corr} of the main paper, the numerator of \equref{eq:contrast-app} is independent of the number of measurements $T$:
	\BE
	\E{\bnbC(\Dl)}=\frac{1}{T}\sum_t \E{\bC^{{I_t}}(\Dl)}=\E{\bC^{{I_1}}(\Dl)}.
	\EE
	
	We now move to express the denominator. First, we note that as our signal $I_1,\ldots,I_T$ are zero mean as defined in \eqref{eq:intef-It}. 
	\BE
	\Var{\left[\bnbC(\Dl)\right]}=\E{\left|\bnbC(\Dl)\right|^2}
	\EE
	Thus, we expand the second moment below.
	Using again the definition of the average correlation in \equref{eq:avg-auto-corr} of the main paper, we express:
%
	
	\BEA
	\!\!\!	\!\!\!&	\!\!\!\!\!\!&\E{| \bnbC(\Dl)|^2} = \frac{1}{T^2}\sum_{(t_1,t_2)} \E{\bC^{{I_{t_1}}}(\Dl) \cdot{\bC^{{I_{t_2}}}(\Dl)}^* }\\
			\!\!\!\!\!\!&	\!\!\!	\!\!\!&=\frac{1}{T^2}\sum_{t} \E{\bC^{{I_{t}}}(\Dl) \cdot{\bC^{{I_{t}}}(\Dl)}^* } \\
   	\!\!\!	\!\!\! &	\!\!\!	\!\!\!&+ \frac{1}{T^2}\sum_{(t_1\neq t_2)} \E{\bC^{{I_{t_1}}}(\Dl) }\cdot \E{{\bC^{{I_{t_2}}}(\Dl)} }^*\label{eq:indp-corr}\\
    	\!\!\!	\!\!\!&	\!\!\!	\!\!\!&=\frac{1}{T^2}\sum_{t} \E{|\bC^{{I_{t}}}(\Dl) |^2 }\label{eq:zero-corr} \\
    	\!\!\!	\!\!\!&	\!\!\!	\!\!\!&=\frac{1}{T} \E{|\bC^{{I_{1}}}(\Dl) |^2 }\label{eq:zero-corr-final}
	\EEA
	
	\noindent where \equref{eq:indp-corr} follows from the assumption that for $t_1\neq t_2$ the speckles $S^\bzero_{t_1},S^\bzero_{t_2}$ are uncorrelated with each other, and \equref{eq:zero-corr} from the fact that for displacements $\Dl\in \CListdl$ the correlation has zero expectation.
	
	From \equref{eq:zero-corr-final}, we conclude that as we increase the number of measurements the denominator scales as $1/T$. As a result the correlation contrast in \equref{eq:contrast-app} scales linearly with $T$.
\end{proof}

Below, we show that unlike the random modulation of \equref{eq:randS}, the \AlgName measurements of \equref{eq:ourS} do not reduce the correlation.

\begin{claim}
For displacements in the order of a few speckle grains, the correlation between \AlgName signals $S^{\oinp}_{t},S^{\tinp}_{t}$ produced by different illuminators $\oinp,\tinp$ is approximately the same as the correlation of the original speckle intensity images.	
\end{claim}
\begin{proof}
As we filter the intensity images to have zero mean, the expected correlation becomes the \emph{intensity covariance} $C_I$  defined in \cite{SeeThroughSubmission}:
\begin{align}
&C_I(\abs{u^{\oinp}(\snsp)}^2,\abs{u^{\tinp}(\snsp+\Dl)}^2) \\ &\equiv \E{|u^{\oinp}(\snsp)|^2|u^{\tinp}(\snsp+\Dl)|^2}-\E{|u^{\oinp}(\snsp)|^2}\E{|u^{\tinp}(\snsp+\Dl)|^2}
\end{align}
Where the expectation is taken other multiple tissue layers of the same type. Classical statistics results state that the covariance between intensities is the square of the covariance between the complex zero mean fields $u$. Thus, the above definition can be further simplified to:
\BE
C_I(\abs{u^{\oinp}(\snsp)}^2,\abs{u^{\tinp}(\snsp+\Dl)}^2) = \left|\E{u^{\oinp}(\snsp){u^{\tinp}(\snsp+\Dl)}^*}\right|^2
\EE
On the other hand, we consider the correlation between \AlgName signals:
\begin{align}
&C_I(S^{\oinp}_t(\snsp),S^{\tinp}_t(\snsp+\Dl)) \\
&=\E{S^{\oinp}_t(\snsp)S^{\tinp}_t(\snsp+\Dl)}-\E{S^{\oinp}_t(\snsp)}\E{S^{\tinp}_t(\snsp+\Dl)}
\end{align}
As before, the second term vanishes as in  \AlgName, $S^{\inp^n}_t(\snsp)$ is a zero mean signal. Also as we assume that different pixels of a speckle pattern are independent, we can express:
\begin{align}
&\E{S^{\oinp}_t(\snsp)S^{\tinp}_t(\snsp+\Dl)} \\
&=\E{ u^{\oinp}(\snsp)u^{\oinp}(\snsp+ \pt)^* {u^{\tinp}(\snsp+\Dl)}^*u^{\tinp}(\snsp+\Dl+ \pt)}\\\nonumber
&=\E{ u^{\oinp}(\snsp) {u^{\tinp}(\snsp+\Dl)}^*} \E{ u^{\oinp}(\snsp+ \pt) u^{\tinp}(\snsp+\Dl+ \pt)^*}^*.
\end{align}
Now we assume the displacement $\pt$ is modest enough so that the correlation at pixel $\snsp$ and the correlation at a small displacement, at pixel $\snsp+\pt$ is similar, so that
\BE\label{eq:cor-local-smooth}
\E{ u^{\oinp}(\snsp) {u^{\tinp}(\snsp+\Dl)}^*}\approx
\E{ u^{\oinp}(\snsp+ \pt) u^{\tinp}(\snsp+\Dl+ \pt)^*}.
\EE
Put all together, we can derive
\begin{align}
C_I(S^{\oinp}_t(\snsp),S^{\tinp}_t(\snsp+\Dl)) & \approx  \left|\E{u^{\oinp}(\snsp){u^{\tinp}(\snsp+\Dl)}^*}\right|^2 \\
& = C_I(\abs{u^{\oinp}(\snsp)}^2,\abs{u^{\tinp}(\snsp+\Dl)}^2)
\end{align}
\end{proof}

We now show that despite the fact that we capture multiple images through the same tissue layer and our measurements are not independent, they are still uncorrelated which is enough to reduce the noise of the speckle auto-correlation we evaluate.
\begin{claim}
For displacements ${\opt},{\tpt} $ whose distance $\|{\opt}-{\tpt}\|$ is larger than the speckle grain, the signals $S^{\inp^n}_{t_1},S^{\inp^n}_{t_2}$ are uncorrelated, so that
	\BE\label{eq:uncor-disp-spk}
	\E{S^{\inp^n}_{t_1}\cdot {S^{\inp^n}_{t_2}}^*}-\E{S^{\inp^n}_{t_1}}\cdot\E{ S^{\inp^n}_{t_2}}^*=0
	\EE
\end{claim}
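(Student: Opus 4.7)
The plan is to reduce the fourth-order statistic in the claim to products of two-point field correlations by invoking the standard circular complex Gaussian model for a fully developed speckle field (the same model implicitly used by the preceding claims when they collapse intensity covariances to squared field correlations), and then to argue that every surviving factor involves a lag larger than the speckle grain and therefore vanishes.

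Concretely, write $a = u^{\inp^n}(\snsp)$, $b = u^{\inp^n}(\snsp + \opt)$, and $c = u^{\inp^n}(\snsp + \tpt)$. The two interferometric signals factor as $S^{\inp^n}_{t_1}(\snsp) = a b^*$ and $S^{\inp^n}_{t_2}(\snsp)^* = a^* c$, so their product is $|a|^2 b^* c$. Treating $u^{\inp^n}$ as a zero-mean circular complex Gaussian field, all non-conjugate second moments $\E{u_i u_j}$ and $\E{u_i^* u_j^*}$ vanish, and Isserlis' (Wick's) theorem pairs $u$-indices with $u^*$-indices:
\BE
\E{a\, a^*\, b^*\, c} = \E{|a|^2}\,\E{b^* c} + \E{a b^*}\,\E{a^* c}.
\EE
Each of the three surviving second moments is a field correlation evaluated at a nonzero lag: $\|\tpt-\opt\|$, $\|\opt\|$, and $\|\tpt\|$ respectively. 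The hypothesis of the claim ensures $\|\opt-\tpt\|$ exceeds the speckle grain, and the \AlgName design stated in the main text requires each individual $\pt_t$ to exceed the grain as well. Since the complex field correlation of a fully developed speckle decays to essentially zero beyond one grain size, all three factors are approximately zero, so the first term in the covariance vanishes.

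The second term is handled by the same device applied to a single signal: $\E{S^{\inp^n}_{t_j}} = \E{u^{\inp^n}(\snsp)\,u^{\inp^n}(\snsp+\pt_{t_j})^*}$ is itself a field correlation at lag $\pt_{t_j}$ larger than the grain and hence approximately zero, so the product of means vanishes. Combining the two pieces yields the stated identity.

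The main obstacle is the Gaussian model assumption, which is what legitimizes the Isserlis factorization; this is the standard fully-developed-speckle hypothesis and is consistent with the previous claim's collapse of intensity covariances to $|\E{u\,u^*}|^2$, so invoking it here is natural rather than novel. A secondary technical point is that "larger than the speckle grain" gives only an approximate vanishing; a fully quantitative statement would require fixing a decay envelope for the field correlation and tracking an explicit error in terms of the displacement-to-grain ratio, but this informal version is all that is needed to conclude uncorrelatedness.
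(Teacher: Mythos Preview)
Your argument is correct, but the paper gets there by a slightly more elementary path. The paper assumes outright that field values at distinct sensor positions are \emph{independent} (this is stated at the head of the supplement). With that assumption, the fourth-order moment
\[
\E{|u(\snsp)|^2\,u(\snsp+\opt)^*\,u(\snsp+\tpt)}
\]
factors directly into a product of three expectations, and the last two vanish because the field has zero mean; no Wick/Isserlis expansion is needed, and Gaussianity is never invoked. Your route instead keeps only the weaker hypothesis of circular Gaussianity, uses Isserlis to reduce to pairwise field correlations, and then kills each of those by the lag-versus-grain argument. Both are valid; the paper's version is shorter and exact under its idealized independence hypothesis, while yours is closer to the standard fully-developed-speckle model and makes explicit that the ``$=0$'' is really an approximation controlled by the decay of the field autocorrelation beyond one grain.
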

\begin{proof}
	Our derivation is based on the assumption that the speckle fields  $u^{\inp^n}$ have zero means, and the speckle values in different sensor positions $\snsp$ are independent random variables.
	For a non zero displacement $\pt$ we get
	\BEA
\E{S^{\inp^n}_{t}} &=& \E{u^{\inp^n}(\snsp)u^{\inp^n}(\snsp+\pt)}  \nonumber \\
&=& \E{u^{\inp^n}(\snsp)}\E{u^{\inp^n}(\snsp+\pt)} = 0. \label{eq:-indp-mean}
\EEA
In  a similar way
\begin{align}
\E{S^{\inp^n}_{t_1}{S^{\inp^n}_{t_2}}^*}\! &= \E{u^{\inp^n}(\snsp){u^{\inp^n}(\snsp+\opt)}^*{u^{\inp^n}(\snsp)}^*u^{\inp^n}(\snsp+\tpt)}  \\
&=\E{|u^{\inp^n}(\snsp)|^2 u^{\inp^n}(\snsp+\opt)^*u^{\inp^n}(\snsp+\tpt)}  \\
&=\E{|u^{\inp^n}(\snsp)|^2}\E{u^{\inp^n}(\snsp+\opt)}^*\!\E{u^{\inp^n}(\snsp+\tpt)}\label{eq:-indp-3}\\
&= 0 \label{eq:indp-3-0} .
\end{align}

Where \equref{eq:-indp-3}  follows again from the assumption that speckle at different pixel positions are independent.
\equpref{eq:-indp-mean}{eq:indp-3-0} prove the desired \equref{eq:uncor-disp-spk}.
\end{proof}

We now move to study the relationship between the correlation we can measure from \AlgName to the actual correlation of the latent pattern.

\begin{claim}\label{clm:ramp}
	For speckle fields $u^{\oinp}(\snsp),u^{\tinp}(\snsp)$ satisfying the tilt shift relationship of \equref{eq:me-corr-tilt-shift}, the \AlgName  measurements $	S^{\oinp}_t(\snsp)$ and 	$S^{\tinp}_t(\snsp)$ defined in \eqref{eq:ourS}  are shifted versions of each other, times a {\em globally} constant phasor, which is independent of pixel position $\snsp$. This is expressed by the relationship:
	\BE\label{eq:phasor-for-disp}
	S^{\oinp}_t(\snsp)\approx	S^{\tinp}_t(\snsp+\Dl)e^{-jk\alpha<\Dl,\pt>},
	\EE with $\Dl=\tinp-\oinp$.	
\end{claim}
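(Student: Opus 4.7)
The plan is a direct substitution: apply the tilt-shift memory-effect relation \equref{eq:me-corr-tilt-shift} to each of the two factors that appear in the definition \equref{eq:ourS} of $S^{\oinp}_t(\snsp)$, then observe that the $\snsp$-dependent parts of the two phase factors cancel, leaving a residual phasor that depends only on the displacements $\Dl$ and $\pt$.

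First I would write
\BE
S^{\oinp}_t(\snsp) \;=\; u^{\oinp}(\snsp)\,u^{\oinp}(\snsp+\pt)^*
\EE
and invoke \equref{eq:me-corr-tilt-shift} twice, once at the sensor coordinate $\snsp$ and once at $\snsp+\pt$. Setting $\btau_1=\snsp-\oinp$ and $\btau_2=\snsp+\pt-\oinp$, the two substitutions yield
\BE
u^{\oinp}(\snsp)\approx e^{jk\alpha\langle\Dl,\btau_1\rangle}\,u^{\tinp}(\snsp+\Dl),\qquad
u^{\oinp}(\snsp+\pt)\approx e^{jk\alpha\langle\Dl,\btau_2\rangle}\,u^{\tinp}(\snsp+\pt+\Dl).
\EE

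Next I would combine these into the product. The conjugation on the second factor flips the sign of its phase, and since $\btau_2-\btau_1=\pt$ is independent of $\snsp$, the $\snsp$-dependent terms in the exponent cancel exactly. The surviving phasor is $e^{-jk\alpha\langle\Dl,\pt\rangle}$, which depends only on the illuminator displacement $\Dl$ and the interferometric shift $\pt$. The remaining field product is, by definition of \equref{eq:ourS},
\BE
u^{\tinp}(\snsp+\Dl)\,u^{\tinp}(\snsp+\Dl+\pt)^*=S^{\tinp}_t(\snsp+\Dl),
\EE
which completes the derivation of \equref{eq:phasor-for-disp}.

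There is no real obstacle: the argument is an algebraic manipulation that hinges on the linearity of the tilt-shift phase in the second argument $\btau$. The only conceptual point worth flagging is that the cancellation of the $\snsp$-dependent contributions is what makes the residual phasor \emph{globally} constant across the sensor; if the ME phase were nonlinear in $\btau$, one would be left with a pixel-dependent modulation that would not factor out in front of the shifted signal. Since \equref{eq:me-corr-tilt-shift} posits an exact linear phase ramp, the result follows as an approximate equality inheriting its accuracy from that of the ME model itself.
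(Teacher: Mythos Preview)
Your proposal is correct and matches the paper's own proof essentially line for line: both substitute the tilt-shift relation \equref{eq:me-corr-tilt-shift} into the two factors of $S^{\oinp}_t(\snsp)=u^{\oinp}(\snsp)u^{\oinp}(\snsp+\pt)^*$, note that conjugation flips the sign of the second phase, and use $\btau_2-\btau_1=\pt$ to collapse the exponent to the $\snsp$-independent phasor $e^{-jk\alpha\langle\Dl,\pt\rangle}$. Your closing remark about the necessity of the linear-in-$\btau$ phase for the cancellation is a nice addition but does not alter the approach.
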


\begin{proof}
From \equref{eq:me-corr-tilt-shift}, we have $u^{\oinp}(\snsp)\approx e^{jk \alpha <\Dl,\btau>} u^{\tinp}(\snsp+\Dl)$, with $\btau=\snsp-\oinp$. Substituting into \eqref{eq:ourS} we have,

\BEA
S^{\oinp}_t(\snsp) &=& u^{\oinp}(\snsp)u^{\oinp}(\snsp+ \pt)^* \nonumber  \\
&\approx& e^{jk \alpha <\Dl,\btau>} u^{\tinp}(\snsp+\Dl) e^{-jk \alpha <\Dl,\btau+\pt>} u^{\tinp}(\snsp+\Dl+\pt)^* \nonumber \\
&=& e^{-jk \alpha <\Dl,\pt>} u^{\tinp}(\snsp+\Dl) u^{\tinp}(\snsp+\Dl+\pt)^* \nonumber \\
&=& e^{-jk \alpha <\Dl,\pt>} S^{\tinp}_t(\snsp+\Dl) \nonumber
\EEA
\end{proof}

With all the above claims we are now ready to prove our main result and show that the global phasor of the previous claim translates into a phase ramp in the auto-correlation.

\begin{claim}\label{claim:trans-interf-auto-corr-app}
	Using the \AlgName measurements of \equref{eq:ourS}, the speckle auto-correlation  $\bC^{I_t}=I_t\star I_t$ is equivalent to the auto-correlation of the latent pattern $\bC^O=O\star O$,  times a phase ramp correction
	\BE\label{eq:ac-I-tilt-ac-O}
	\bC^{I_t}(\Dl)\approx \bC^{O}(\Dl)e^{-jk\alpha<\Dl,\pt>}.
	\EE
\end{claim}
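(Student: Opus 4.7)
The plan is to reduce the statement by first unrolling every \AlgName pattern $S^{\inp^n}_t$ in terms of a single reference pattern $S^{\bzero}_t$ via \clmref{clm:ramp}, and then applying an impulse-style auto-correlation approximation in the spirit of \equref{eq:impulse-assump}.

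First, I would apply \clmref{clm:ramp} with the reference source placed at the origin to obtain $S^{\inp^n}_t(\snsp) \approx e^{jk\alpha<\inp^n,\pt>}\, S^{\bzero}_t(\snsp - \inp^n)$, so that
\BE
I_t(\snsp) \approx \sum_n e^{jk\alpha<\inp^n,\pt>}\, S^{\bzero}_t(\snsp - \inp^n).
\EE
This expresses $I_t$ as a superposition of shifted copies of one speckle pattern, modulated pointwise by a source-dependent global phasor. Crucially, the phasor is independent of $\snsp$, which is what makes the forthcoming pull-out possible.

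Next, I would substitute this representation into $\bC^{I_t}(\Dl) = \sum_\snsp I_t(\snsp) I_t(\snsp+\Dl)^*$, expand the product of sums into an $(n,m)$ double sum, and change the summation variable $\snsp \mapsto \snsp + \inp^n$. The two source-dependent phasors combine into a single scalar that depends only on the pair displacement $\inp^m - \inp^n$, and the inner sum collapses to $\bC^{S^{\bzero}_t}$ evaluated at $\Dl - (\inp^m - \inp^n)$, giving
\BE
\bC^{I_t}(\Dl) \approx \sum_{n,m} e^{-jk\alpha<\inp^m - \inp^n,\pt>}\, \bC^{S^{\bzero}_t}\!\bigl(\Dl - (\inp^m - \inp^n)\bigr).
\EE

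The final step is the impulse heuristic: the auto-correlation of a single-source \AlgName pattern is sharply concentrated near the origin, so only $(n,m)$ pairs with $\inp^m - \inp^n \approx \Dl$ contribute. This lets me replace the phase factor by $e^{-jk\alpha<\Dl,\pt>}$ and extract it from the sum; what remains is $\sum_{n,m} \bC^{S^{\bzero}_t}(\Dl - (\inp^m - \inp^n))$, which matches $\bC^O(\Dl)$ by direct expansion of $O = \sum_n \delta(\cdot - \inp^n)$, yielding \equref{eq:ac-I-tilt-ac-O}.

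The main obstacle I expect is justifying the phase pull-out: the impulse approximation is only valid if the phasor $e^{-jk\alpha<\cdot,\pt>}$ is slowly varying on the scale of a speckle grain, which is the support of $\bC^{S^{\bzero}_t}$. Since this phase scale is set by $\alpha\pt$ and the shears $\pt$ used in practice are modest relative to the tissue thickness, the smoothness holds, but I would state this regime explicitly so that the $\approx$ in \equref{eq:ac-I-tilt-ac-O} has a clear validity range, paralleling the way the earlier claims confine themselves to displacements of order a few speckle grains.
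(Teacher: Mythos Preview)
Your proposal is correct and uses the same two ingredients as the paper's proof: \clmref{clm:ramp} to reduce every $S^{\inp^n}_t$ to a shifted copy of $S^{\bzero}_t$ with a global phasor, and the impulse approximation $S^{\bzero}_t\star S^{\bzero}_t\approx\delta$. The only difference is bookkeeping. The paper absorbs the per-source phasor into a modified latent pattern $\tilde{O}(\snsp)=O(\snsp)e^{-jk\alpha\langle\snsp,\pt\rangle}$, writes $I_t\approx S^{\bzero}_t\ast\tilde{O}$, applies the impulse approximation once to get $\bC^{I_t}\approx\bC^{\tilde{O}}$, and then uses the Wiener--Khinchin/Fourier-shift theorem to compute $\bC^{\tilde{O}}(\Dl)=e^{-jk\alpha\langle\Dl,\pt\rangle}\bC^{O}(\Dl)$ \emph{exactly}. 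You instead expand the double sum over $(n,m)$ directly and pull the phase out at the end. Your route is more elementary (no Fourier detour), but note that your ``slowly-varying phase'' concern is actually moot: once you invoke $\bC^{S^{\bzero}_t}\approx\delta$, the only surviving terms have $\inp^m-\inp^n=\Dl$ on the nose, so the phase pull-out is exact at the same level of approximation as the impulse step itself---the paper's packaging makes this manifest.
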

\newcommand{\addFigurea}[1]{
\begin{subfigure}{0.59\linewidth}
    \centering
    \smallskip
    {\includegraphics[width=1\textwidth,height=1\textwidth]{#1}}
 \end{subfigure}}
\newcommand{\addFigureSmalla}[1]{
\begin{subfigure}{0.3\linewidth}
    \centering
    \smallskip
    {\includegraphics[width=1\textwidth]{#1}}
 \end{subfigure}
}
\newcommand{\littleSpacea}{\hspace{-0.05\linewidth}}
\newcommand{\vlittleSpacea}{\vspace{-0.05\linewidth}}

\begin{figure}[t]
        \centering
        \begin{tabular}{cc}
        &\littleSpacea Full Frame Correlation               
        \\
        \littleSpacea\begin{tabular}{cc}
        \rotatebox[origin=c]{90}{Ground Truth}&
                \addFigureSmalla{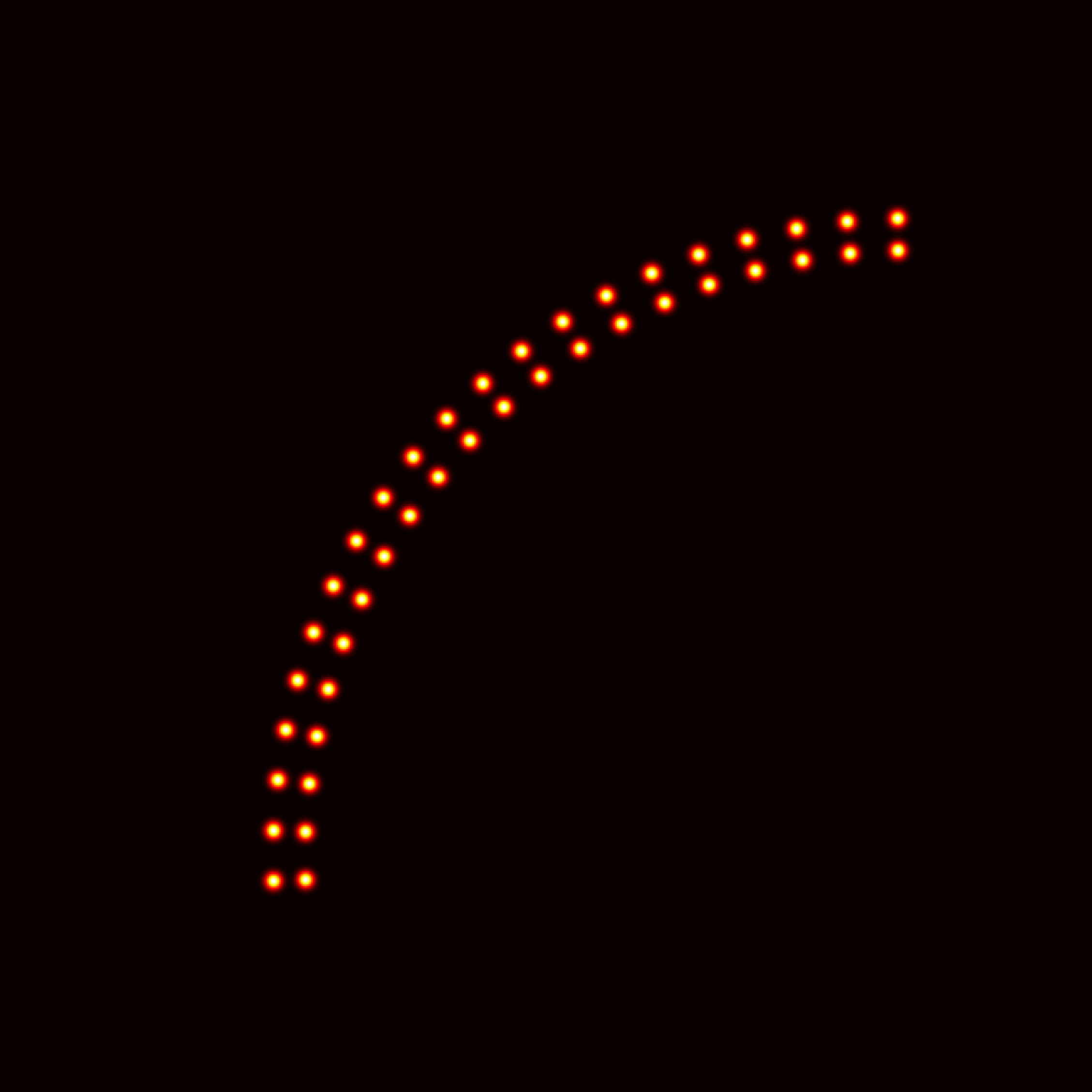}\\
                \rotatebox[origin=c]{90}{Input}&
                \addFigureSmalla{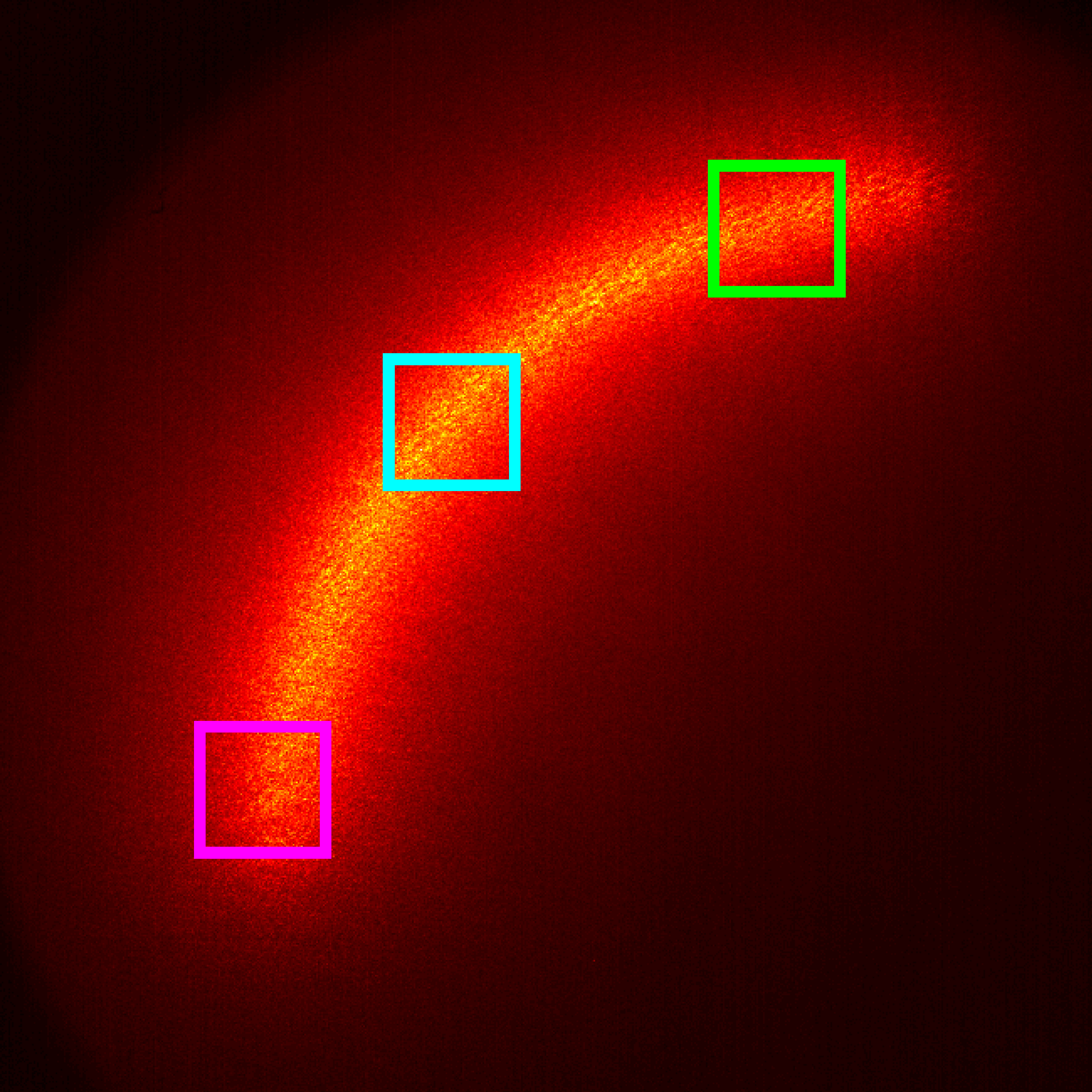}
             \end{tabular}&
             \hspace{-0.02\linewidth}\littleSpacea\addFigurea{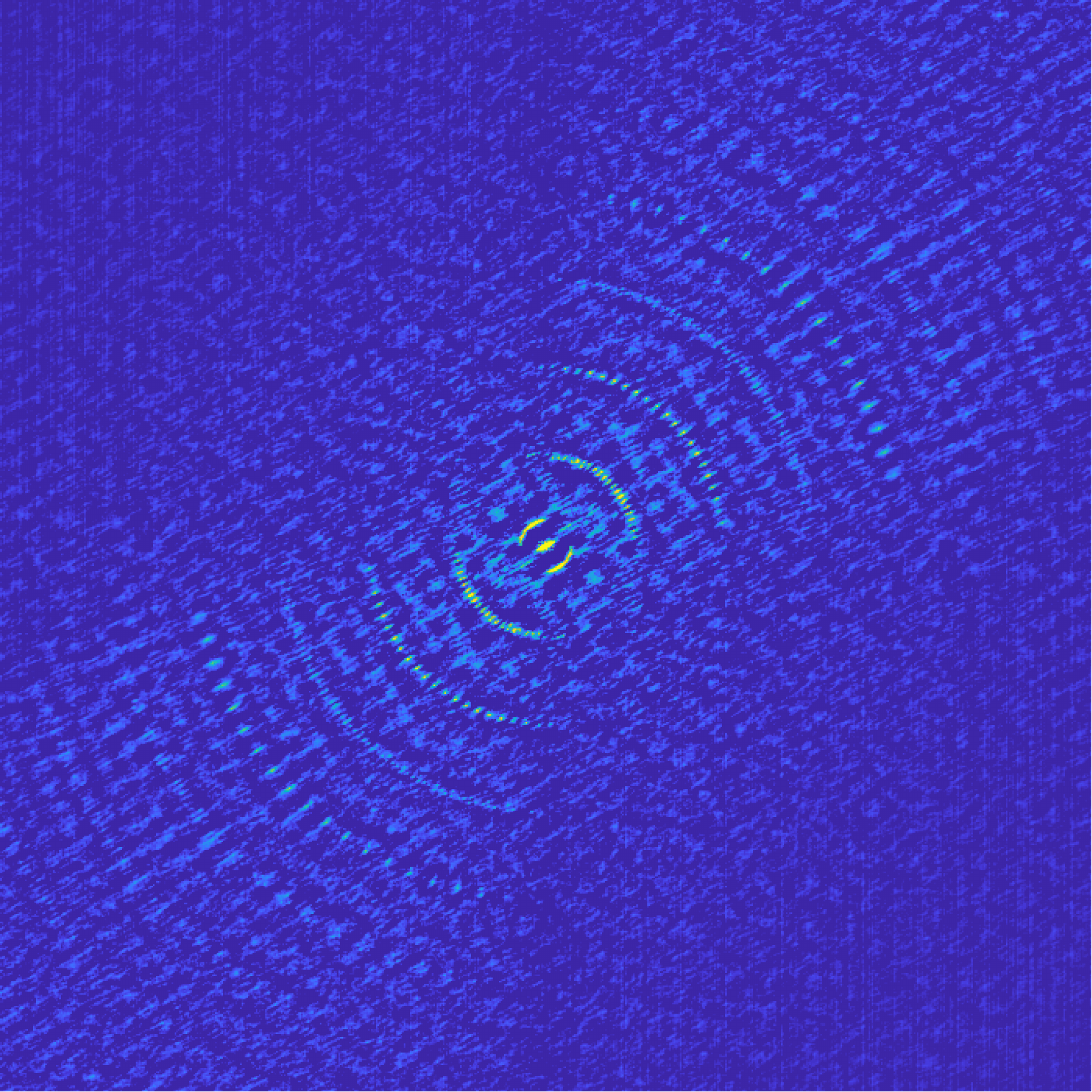}
        \end{tabular}\\
        \vspace{0.02\linewidth}
        \begin{tabular}{cccc}
        \hspace{-0.02\linewidth}\rotatebox[origin=c]{90}{Local Correlation}&
           \addFigureSmalla{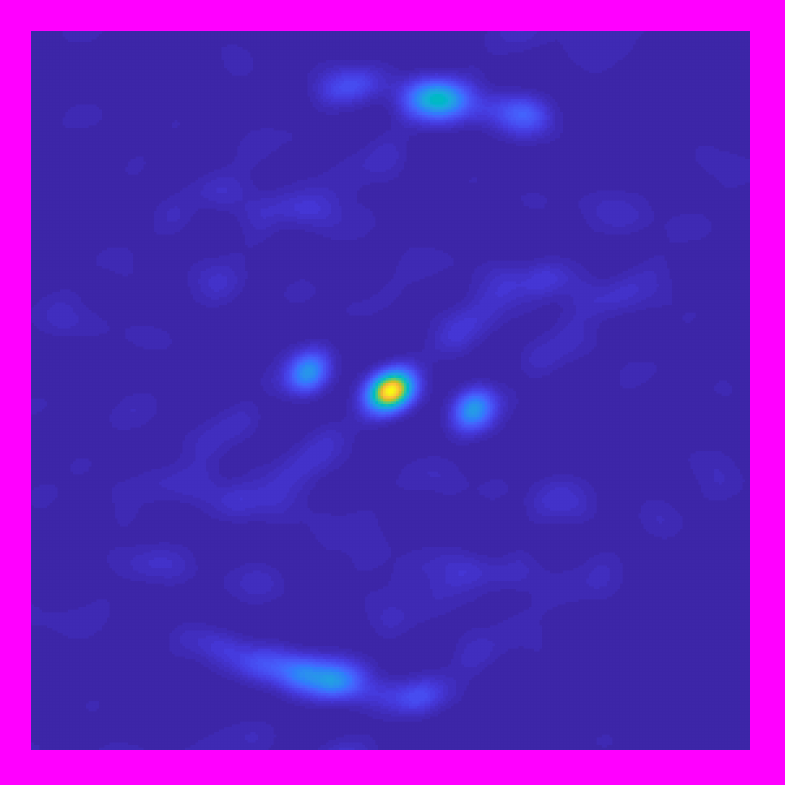}&
         \littleSpacea\addFigureSmalla{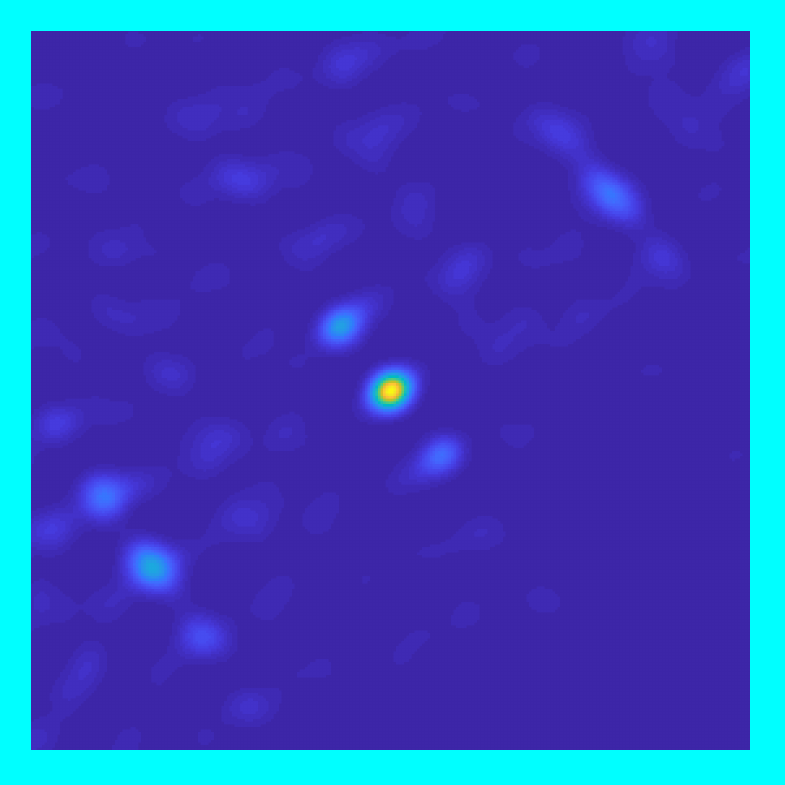}&
         \littleSpacea\addFigureSmalla{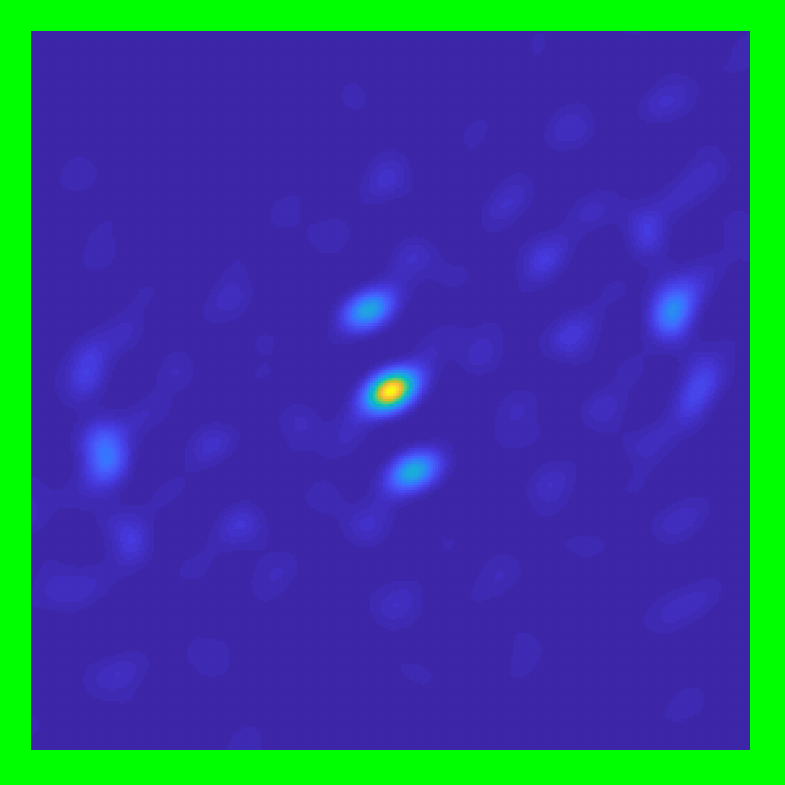} 
        \end{tabular}
     \caption{{\bf Local versus global auto-correlation.} The orientation of the auto-correlation evaluated in three different local windows of the image matches the orientation of the arc in the corresponding region of the latent image. By contrast, the auto-correlation of the full frame is much nosier, and  decays for large displacements due to limited ME.}
        \label{fig:arc_win}
\end{figure}


\begin{proof}
Using \clmref{clm:ramp} for the specific case $\oinp = \inp^n$ and $\tinp = \bzero$ we get $S^{\inp^n}_t(\snsp)\approx \So_t(\snsp+\inp^n)e^{-jk\alpha<\inp^n,\pt>}$. Summing over all sources $S^{\inp^n}_t(\snsp)$ we have
\BE
I_t(\snsp) \approx  \So_t \ast \tilde{O}. \nonumber
\EE
with $\tilde{O}$
\BE\label{eq:tild-O-pr}
\tilde{O}(\snsp)=O(\snsp) e^{-jk\alpha<\snsp,\pt>}.
\EE
This is due to the fact that $O$ is non zero only at positions $\snsp=\inp^n$ for one of the sources, so effectively $\tilde{O}$ has for each sensor position  $\inp^n$ the global phasor of \equref{eq:phasor-for-disp}.
As in the standard derivation of the speckle auto-correlation we assume
$\So_t\star \So_t=\delta$, and hence
\BE
I_t \star I_t \approx \tilde{O} \star \tilde{O}.
\EE
or equivalently
\BE\label{eq:ca-I-to-ac-O}
	\bC^{I_t}(\Dl)\approx \bC^{\tilde{O}}(\Dl).
\EE
Hence we are left with the need of computing $\bC^{\tilde{O}}(\Dl)$.
We note that by the  Wiener-Khinchin theorem, $\bC^{\tilde{O}}(\Dl)$ is the inverse Fourier transform of $\abs{\fourier(\tilde{O})}^2$.
However as $\tilde{O}$ is obtained by multiplying $O$ with a phase ramp (see \equref{eq:tild-O-pr}), their Fourier transforms are related via a shift:
\BE
\fourier(\tilde{O})(\bomg)=\fourier({O})(\bomg+\alpha\pt).
\EE
The shift relation holds also for their absolute values
\BE
\abs{\fourier(\tilde{O})(\bomg)}^2=\abs{\fourier({O})(\bomg+\alpha\pt)}^2
\EE
Hence $\bC^{\tilde{O}}(\Dl)$ and $\bC^{{O}}(\Dl)$ are related via a tilt:
\BE\label{eq-tild-O-ac}
\bC^{\tilde{O}}(\Dl)=\bC^{{O}}(\Dl)e^{-jk\alpha<\Dl,\pt>}.
\EE
Substituting \equref{eq-tild-O-ac} in \equref{eq:ca-I-to-ac-O} proves the desired \equref{eq:ac-I-tilt-ac-O}.

\end{proof}

\vspace{0.05in}

\subsection{Optimizing using local support}\label{sec:optimization}


For sources located inside the scattering medium, speckle patterns emerging from a single source have local support and do not spread over the entire sensor. To take advantage of this property,~\cite{SeeThroughSubmission} suggest matching the local speckle correlations in the image, rather than the full-frame auto-correlation. We review this algorithm below.

\newcommand{\addFigureo}[1]{
\begin{subfigure}{0.36\linewidth}
    \centering
    \smallskip
    {\includegraphics[width=1\textwidth,height=0.89\textwidth]{#1}}
 \end{subfigure}}
\newcommand{\addFigurec}[1]{
	\begin{subfigure}{0.23\linewidth}
		\centering
		\smallskip
		{\includegraphics[width=1\textwidth,height=1\textwidth]{#1}}
\end{subfigure}}

\newcommand{\littleSpacev}{\vspace{-0.18cm}}
\newcommand{\littleSpaceo}{\hspace{-0.02\linewidth}}
\newcommand{\bigSpaceo}{\hspace{0.02\linewidth}}
\newcommand{\bigSpacev}{\vspace{0.05cm}}

\begin{figure}[!t]
        \centering
        \begin{tabular}{c}
        \begin{tabular}{cc}
        $I$&
        $O$
        \\
        \addFigureo{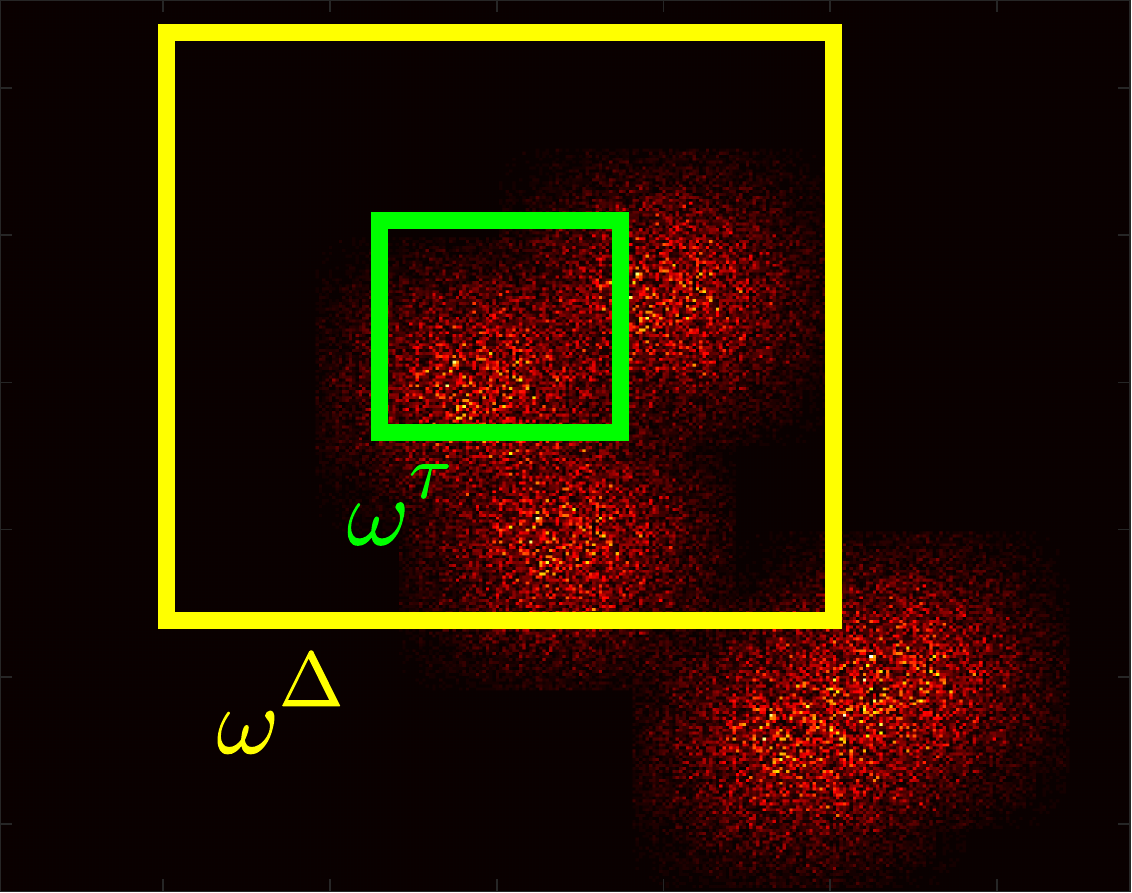}&\littleSpaceo
        \addFigureo{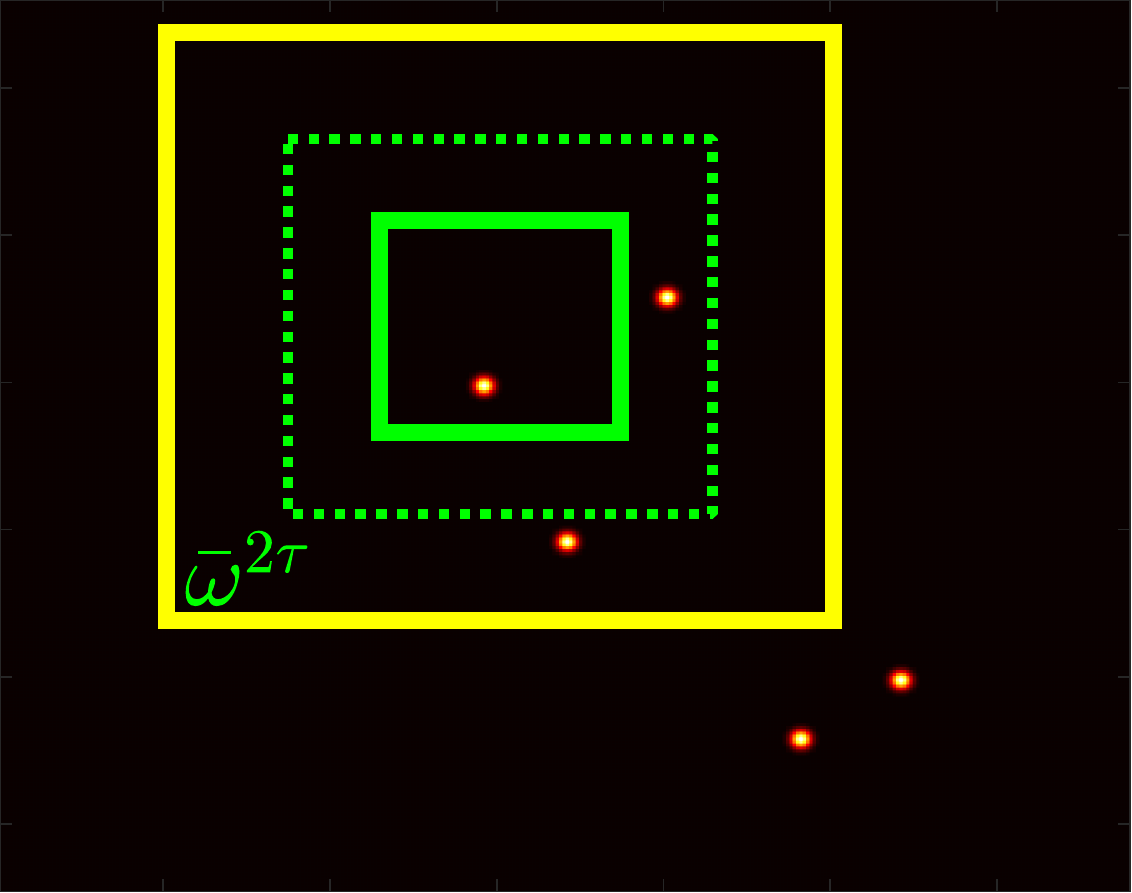}\\
    \end{tabular}
\\[1.7 cm]
 \begin{tabular}{ccc}
 	 $I_{w^\tau}\star I_{w^\Delta}$&
 	$O_{w^\tau}\star O_{w^\Delta}$&
 	$O_{\brw^{2\tau}}\star O_{w^\Delta}$
 	\\
        \addFigurec{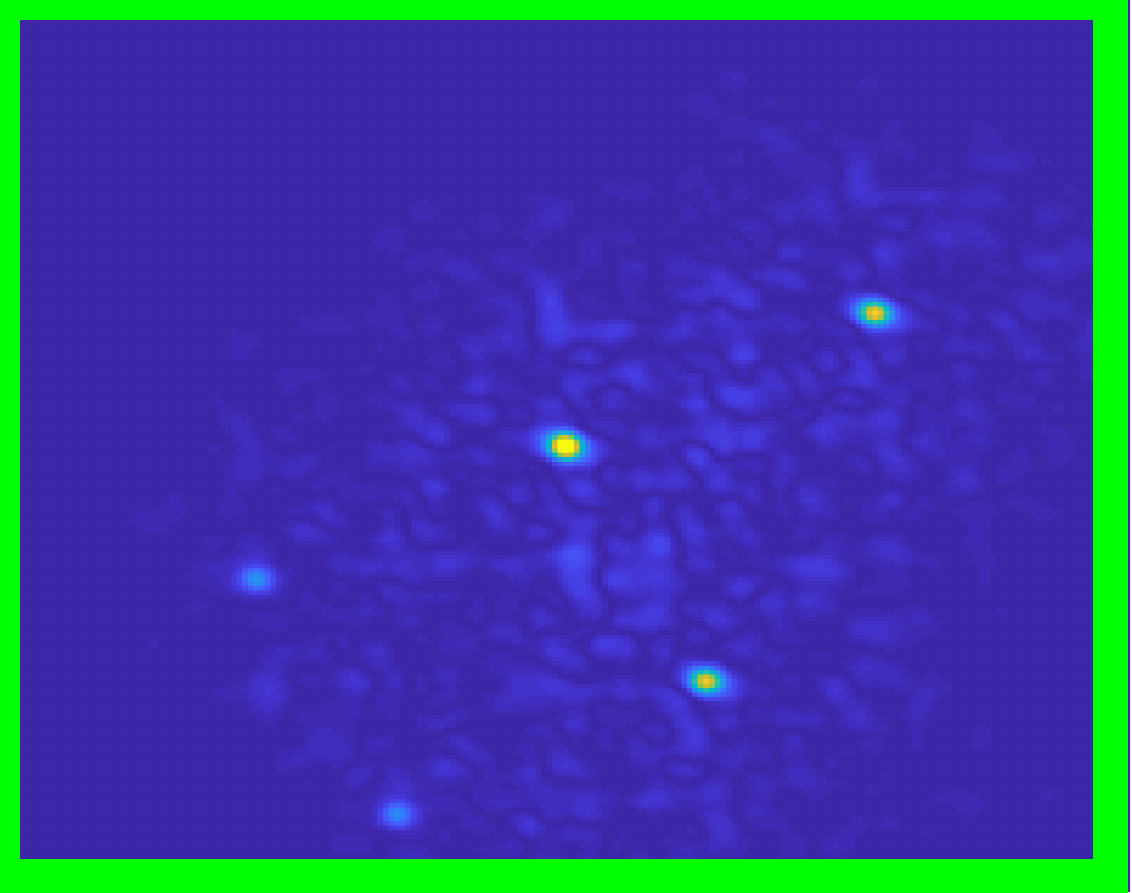}&\littleSpaceo
        \addFigurec{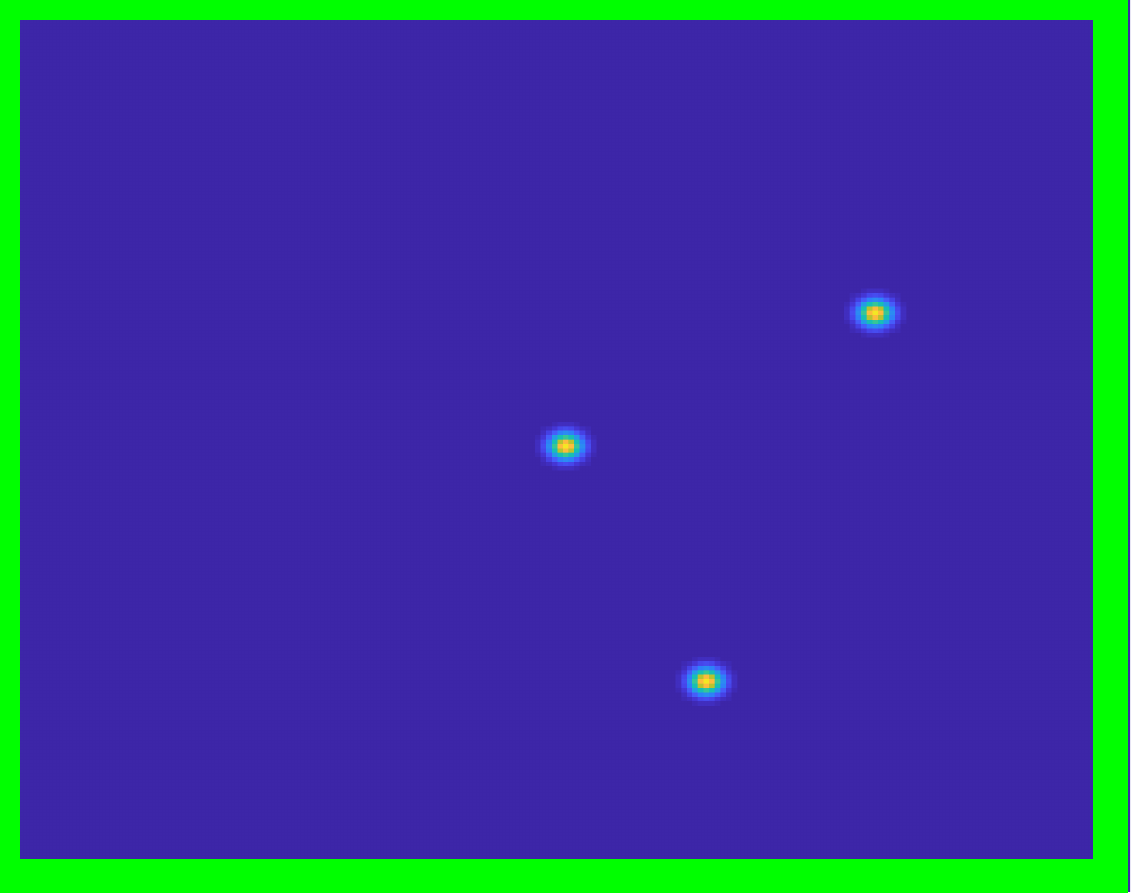}&\littleSpaceo
        \addFigurec{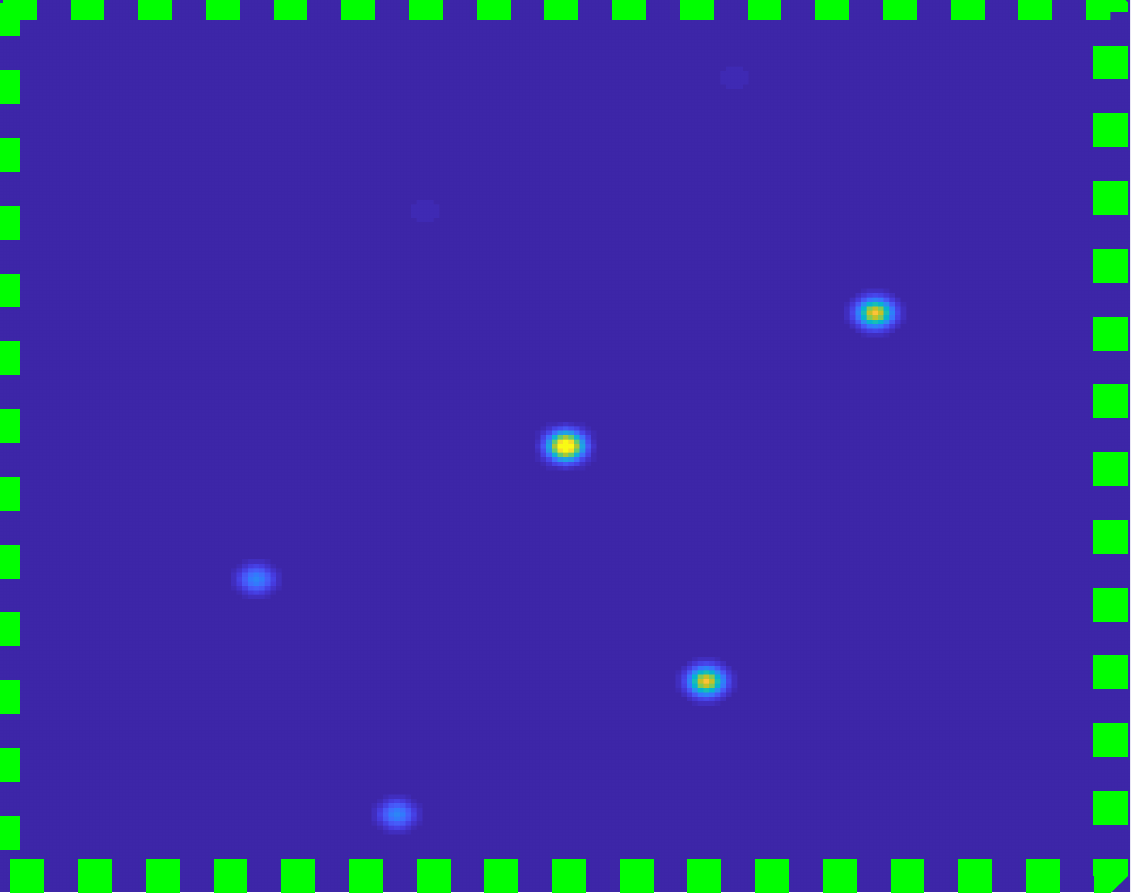}\littleSpaceo
         \end{tabular}
     \\[1.2 cm]
         \begin{tabular}{cc}
         	 $I$&$O$
         	 \\
        \addFigureo{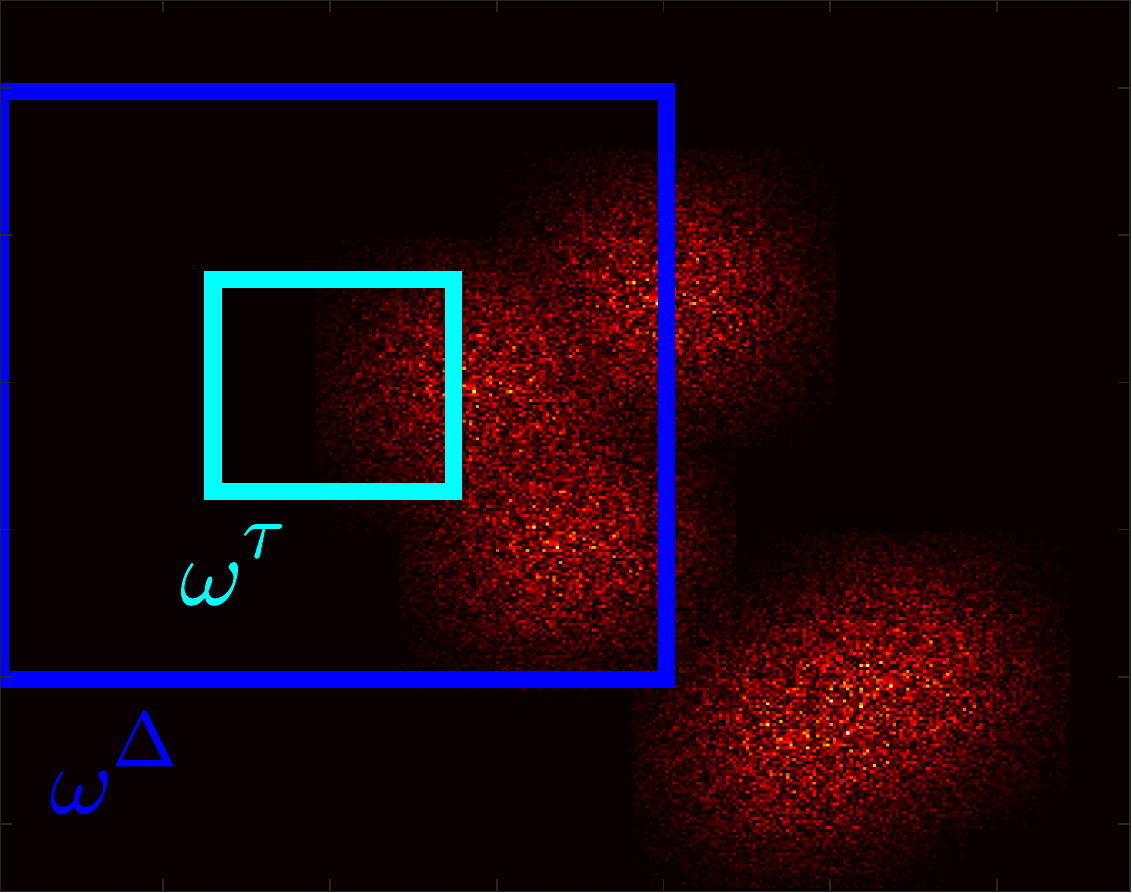}&\littleSpaceo
		\addFigureo{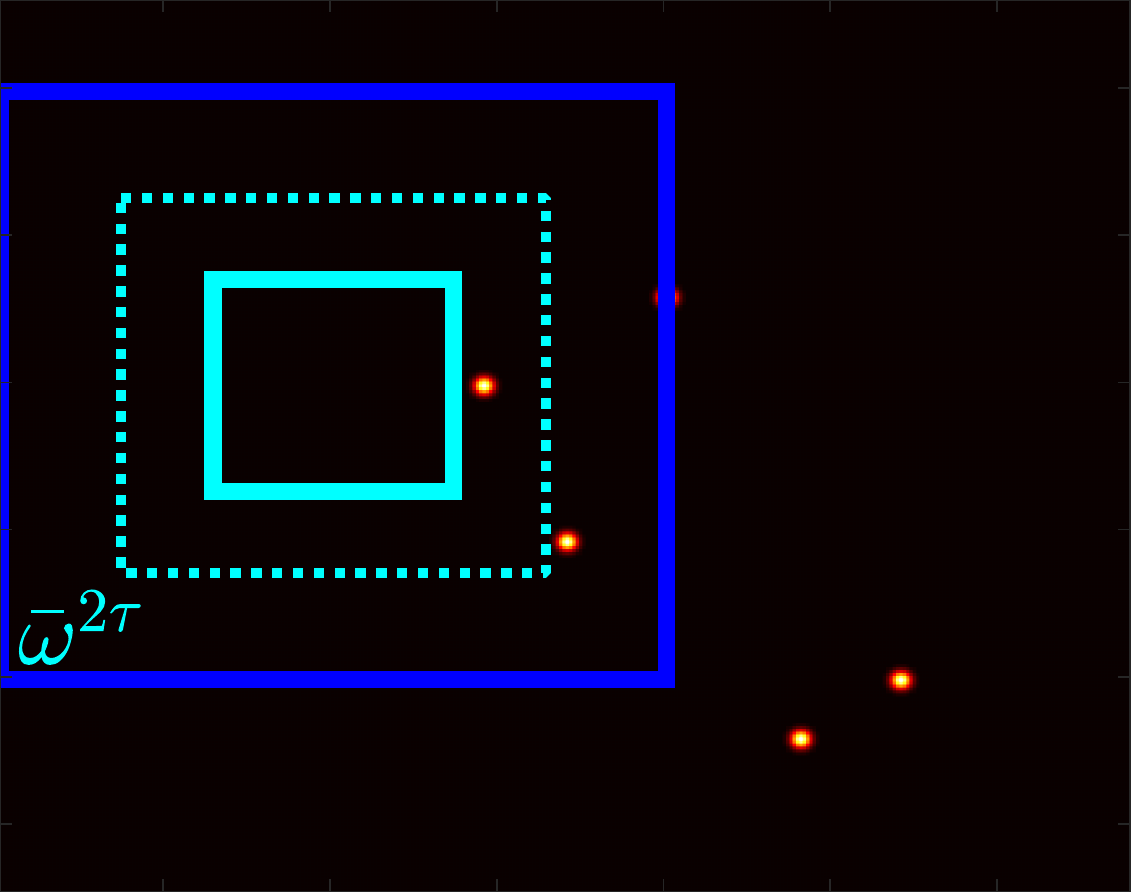}
	\end{tabular}
\\[1.7 cm]
\begin{tabular}{ccc}
	 $I_{w^\tau}\star I_{w^\Delta}$&
	$O_{w^\tau}\star O_{w^\Delta}$&
	$O_{\brw^{2\tau}}\star O_{w^\Delta}$ 
	\\
		\addFigurec{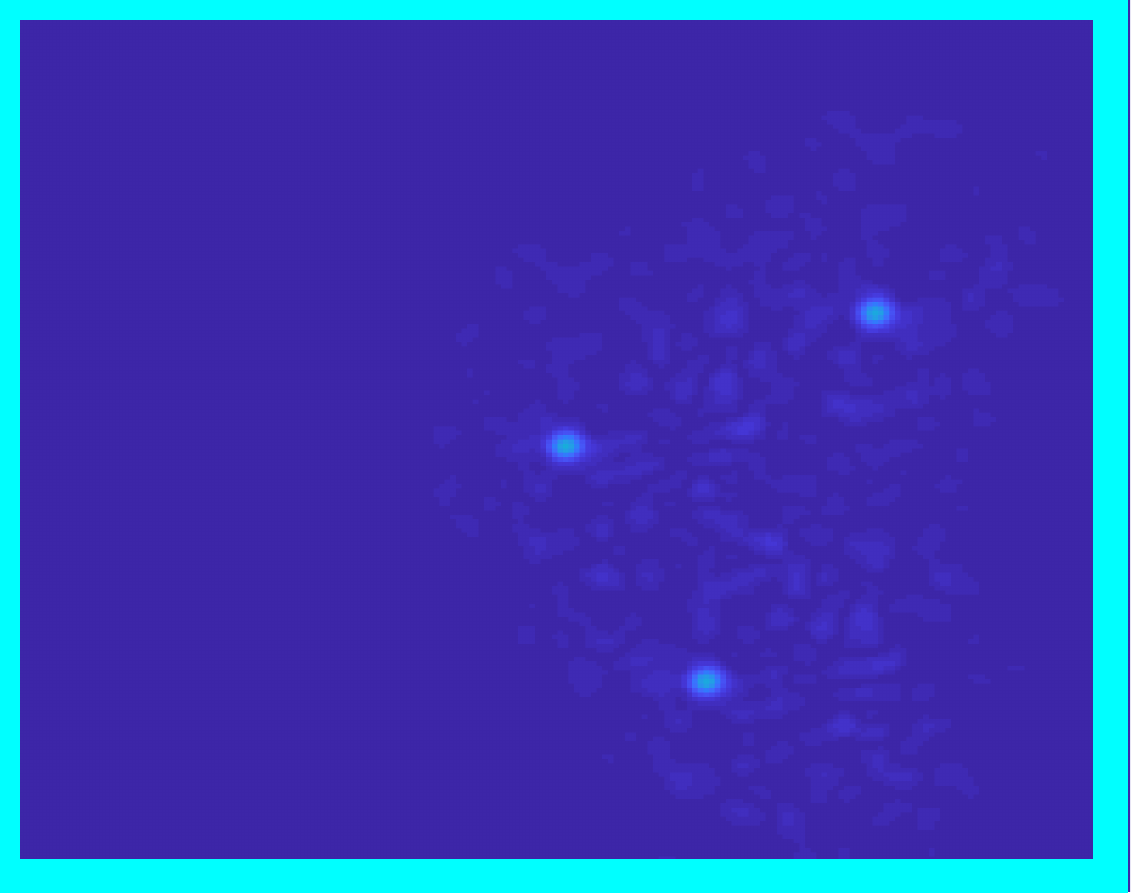}&\littleSpaceo
		\addFigurec{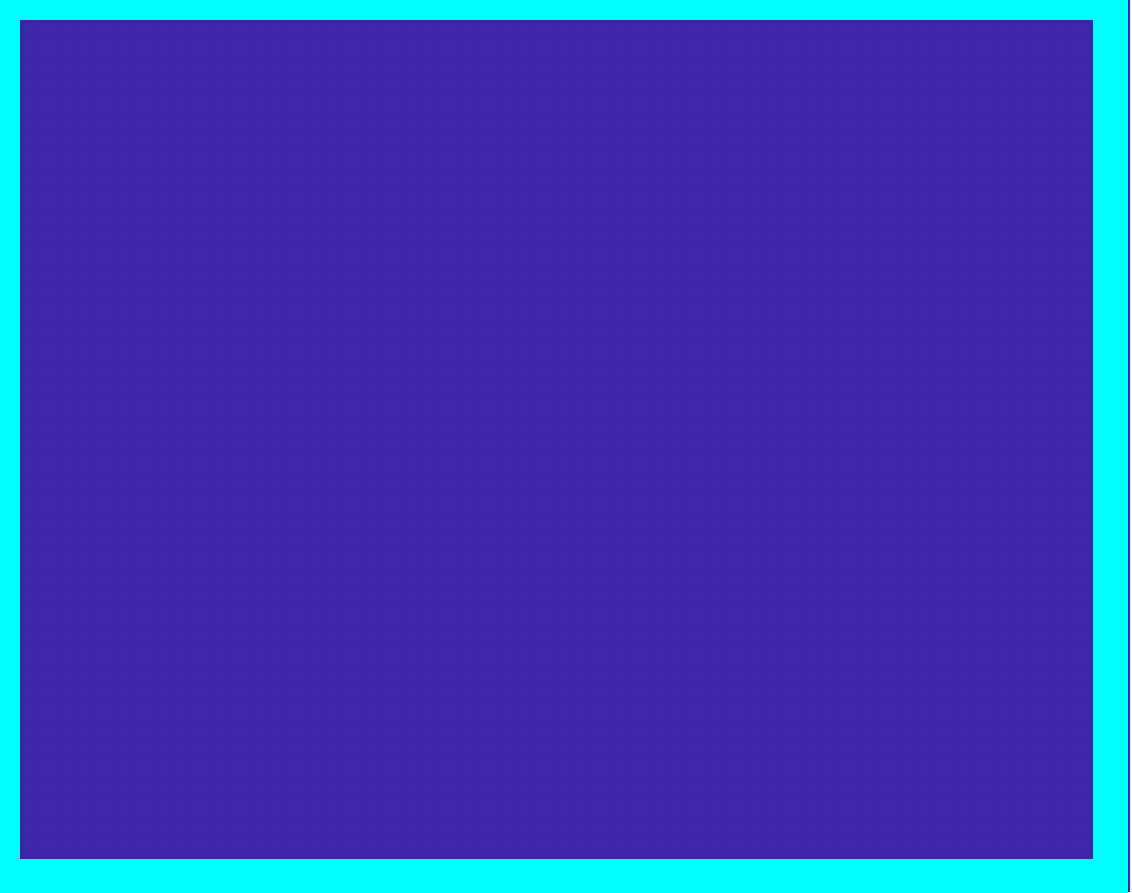}&\littleSpaceo	
		\addFigurec{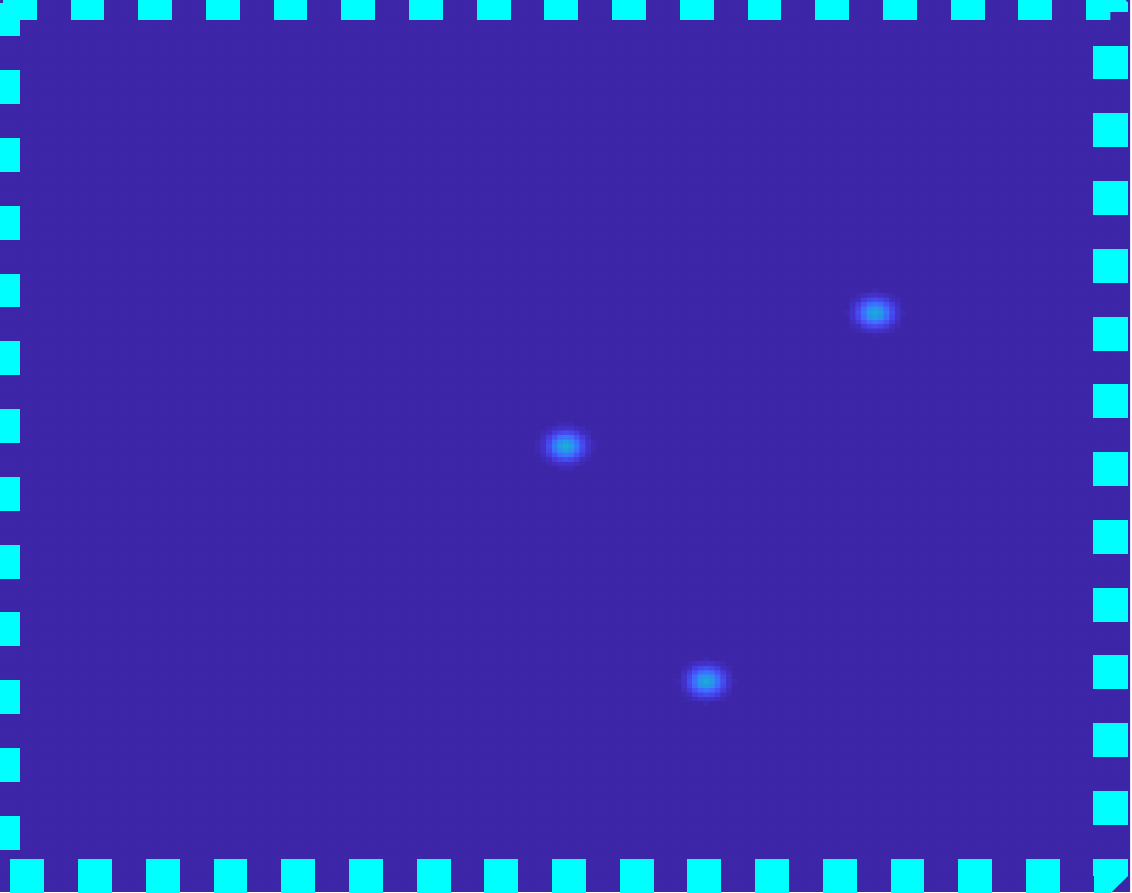}\littleSpaceo
      \end{tabular}
   \end{tabular}	\vspace{-0.0cm}
     \caption{{\bf Local window selection for optimization.} We consider local subwindows $w^\tau$ (light green and cyan frames) whose support is equivalent to the speckle support size. Each such window is correlated with a wider window $w^\Delta$ (yellow and blue frames) around it, whose support is equivalent to the ME range. As speckle inside window $w^\tau$ can arise from a source outside $w^\tau$,  $O_{w^\tau}\star O_{w^\Delta}$ may not match  $I_{w^\tau}\star I_{w^\Delta}$. To overcome this, we use an extended non-binary sub-window $\brw^{2\tau}=w^\tau\star w^\tau$ for $O$, whose support is indicated by dashed lines.}
        \label{fig:optim_win}
\end{figure}

\begin{figure}[t]
\centering
\includegraphics[width=\linewidth]{./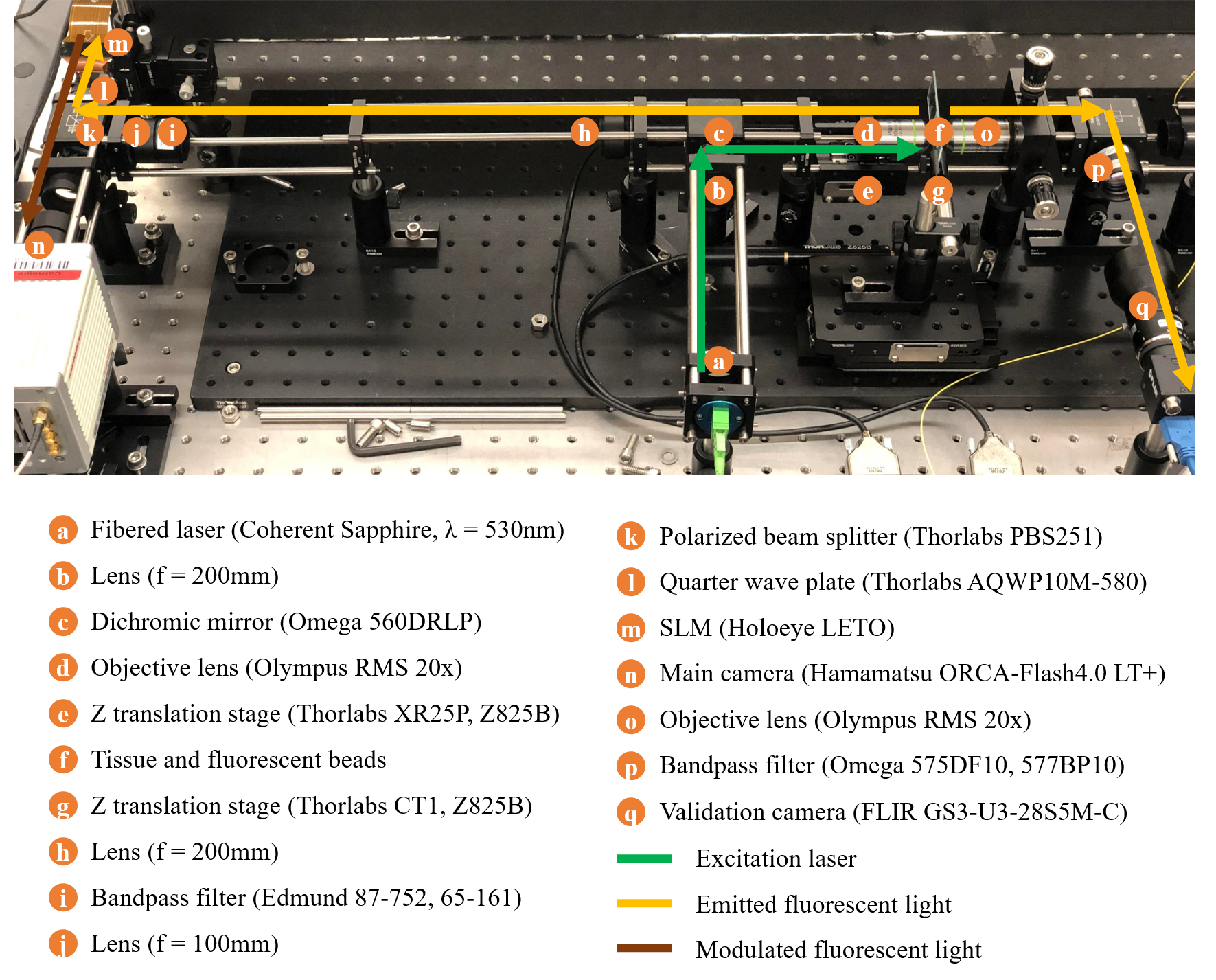}
\caption{\label{fig:prototype}\textbf{Hardware prototype.} Image of our prototype, anal-
ogous to the schematic in \figref{fig:setup}. }
\end{figure} 

\begin{figure*}[t]
\centering
\includegraphics[width=55em]{./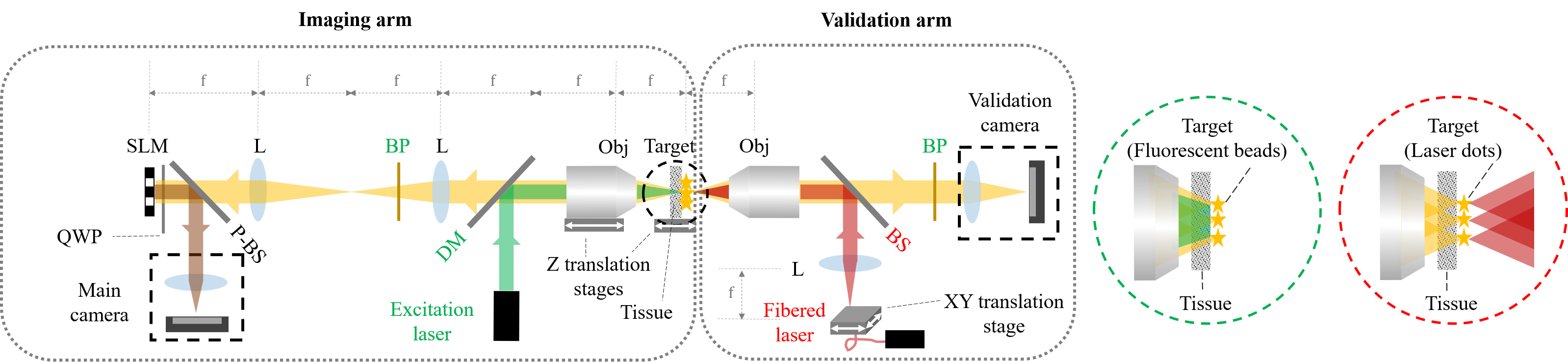}
\caption{\label{fig:setupfull}\textbf{Full experiment setup combining fluorescent beads with a shifting laser.} We add to the setup in \figref{fig:setup} a  fibered laser source. The diffused fibered output is imaged using a tube lens and objective to generate a source exactly at the back plane of a tissue layer. The laser light is scattered through the tissue and the speckle pattern it generates is imaged by a the main camera from the other side of the tissue.   We mount a fibered laser on a 2D translation stage so we can create programmable patterns.}
\end{figure*} 

For motivation, consider \figref{fig:arc_win} that we re-plot from~\cite{SeeThroughSubmission}.
It visualizes speckles produced by latent incoherent illuminators in a double arc layout.
 Computing auto-correlation at small subwindows of the speckle image reveals the local orientation of the arc in the latent image. By contrast, when computing the auto-correlation of the full frame, the correlation is considerably noisier even for small displacements. Correlations between far illuminators are even harder to detect due to the fact that the ME range is limited and for large displacements $\Dl$ the desired correlation (see \equref{eq:me-corr-tilt-shift}) is very weak.

The optimization algorithm takes as input two threshold parameters  $\maxt,\maxd$. It
assumes that speckles from one illuminator are spread over pixels in a window of size $\maxt$ around it, and that ME correlation holds for displacements  $|\Dl|<\maxd$.
The thresholds $\maxt,\maxd$ are free parameters that can be fine-tuned to improve reconstruction quality, and~\cite{SeeThroughSubmission} show that  performance  are not too sensitive  to their exact values.
The algorithm offers improved performance compared to the baseline full-frame auto-correlation algorithm in situations where $\maxt<\maxd$, namely when the support from one illuminator is lower than the ME range. For thick scattering slices, where high-order scattering is dominant, this relationship does not hold and the approach reduces to the baseline full-frame auto-correlation algorithm.

The algorithm searches for a latent image $O$ such that the auto-correlation in its local windows will match the auto-correlation in the local windows of the input image $I$. We define $w^\Dl$ and $w^\btau$ to be binary windows with support $\maxd,\maxt$, respectively, and $\brw^{2\btau}=w^\btau\star w^\btau$---note that, from its definition, $\brw^{2\btau}$ is non-binary. Then, we recover $O$ by solving the optimization problem:
\BE\label{eq:minO}
\min_O \textstyle\sum_p \|\avgt e^{jk\alpha<\Dl,\pt>}\cdot I_{t,w^\btau_p}\star I_{t,w^\Dl_p}- O_{\brw^{2\btau}_p}\star O_{w^\Dl_p}\|^2,
\EE
where $I_{t,w^\btau_p},I_{t,w^\Dl_p},O_{\brw^{2\btau}_p},O_{w^\Dl_p}$ denote windows of a given size cropped from the input and latent images, centered around the $p$-th pixel.

\equref{eq:minO} uses windows of three different sizes, and we use \figref{fig:optim_win} to visualize their different roles: Each $w^\btau_p$ is a small window around pixel $p$ whose support is equivalent to the expected support size of the speckle pattern due to a single illuminator.  $w^\Dl_p$ is a larger window around the same pixel, corresponding to the maximal displacement $\maxd$ for which we expect to find correlation, as dictated by the ME range.

We note, additionally, that the window cropped from $O$ should be wider than that from $I$. This is because speckle at a certain pixel can arise from an illuminator within a window around it. For example, in \figref{fig:optim_win}, no illuminator is located inside  the cyan subwindow of $O$, but part of the speckle pattern of a neighboring source is leaking into the corresponding cyan subwindow of $I$. As a result $O_{w^\btau_p}\star O_{w^\Dl_p}$ is a zero image, even though $I_{w^\btau_p}\star I_{w^\Dl_p}$ detects three impulses. It is easy to prove that this can be addressed using the larger, non-binary window $\brw^{2\btau}$ in the latent image, indicated in \figref{fig:optim_win} using dashed lines: in this case, $O_{\brw^{2\btau}_p}\star O_{w^\Dl_p}$ correctly detects the same three impulses as $I_{w^\btau_p}\star I_{w^\Dl_p}$.

The motivation for the cost of \equref{eq:minO} is that, even if two illuminators in the latent pattern $O$ are at a distance larger than the ME range $\maxd$, they can be recovered if there exists a sequence of illuminators between them, where each two consecutive illuminators in the sequence are separated by a distance smaller than $\maxd$. For example, in \figref{fig:optim_win}, the illuminators outside the yellow and cyan $w^\Dl$ windows are recovered thanks to the intermediate illuminators.

The optimization problem in \equref{eq:minO} is no longer a phase retrieval problem as in standard full-frame auto-correlation algorithms. We minimize it  using the ADAM gradient-based optimizer~\cite{Kingma2014AdamAM}.
Gradient evaluation is described in \cite{SeeThroughSubmission}, and reduces to a sequence of convolution operations that can be performed efficiently, e.g., using a GPU based fast Fourier transform. For initialization, we set the latent image to random noise; we have observed empirically that the optimization is fairly insensitive to initialization. Finally, we note that even though we could place a window $w_p$ around every pixel of $I$, the empirical correlation  is insensitive to small displacements of the central pixel $p$. Therefore, in practice, we consider windows only at strides $\maxt/2$, which helps reduce  computational complexity.

We note that the optimization problem of \equref{eq:minO} is similar to ptychography algorithms~\cite{PhysRevLett.98.034801}. However, we emphasize that previous ptychographic approaches for extending the ME range recover the latent illuminators from \emph{multiple} image measurements, captured by sequentially exciting different areas on the scattering sample~\cite{zhou2019retrieval,Gardner:19,Li:2019:Ptycho,Li:B:2019:Ptycho,Shekel_2020}. By contrast, our algorithm recovers the latent illuminators from a fixed number of full-frame shots.




\subsection{Interferometric measurements}\label{sec:interf}
While most of our setup is similar to the one used by \cite{SeeThroughSubmission},
we use an SLM (Holoeye LETO) in the Fourier plane of the imaging arm, which we use to modulate the scattered light. We visualize the hardware schematic in \figref{fig:setup} and its image  in  \figref{fig:prototype}.
To capture interferometric  measurements, we use a polarizing beamsplitter to horizontally polarize the wave, followed by a quarter waveplate at an angle of $45^\circ$ to induce a $\pi/2$ phase delay along one of the axis to produce a circularly polarized light. The SLM only modulates the polarization state along its fast axis, which is horizontal, and its slow axis is reflected without any modulation. The light reflected off the SLM is sent again through quarter waveplate, which adds another $\pi/2$ phase shift that makes the light linearly polarized again, but in a vertical direction; finally, the polarizing beamsplitter interferes the modulated and unmodulated  waves.

To capture the \AlgName measurements of \equref{eq:ourS}, we place on the SLM a phase ramp whose frequency and orientation matches the translation $\pt$ we want to implement. We capture $K=3$ images of this phase ramp plus a global phasor $\phi_k \in \{0,\frac{\pi}{3},\frac{2\pi}{3}\}$.
Since the SLM modulates only part of the wave, we obtain the  measurement
\BE \hat{S}^{\inp^n,k}_t = \abs{u^{\inp^n}(\snsp) + e^{j\phi_k} u^{\inp^n}(\snsp+ \pt)}^2,\EE
where $k$ index the phase shift, $n$ index the fluorescent source, and $t$ index the translation  $\pt$ of the current measurement.
Using phase-shifting interferometry~\cite{hariharan1987digital}, we can extract the interference signal desired in \equref{eq:ourS} as:
\BE\label{eq:ourSint}
S^{\inp^n}_t= \sum_{k} e^{j\phi_k}\abs{u^{\inp^n}(\snsp) + e^{j\phi_k} u^{\inp^n}(\snsp+ \pt)}^2  = u^{\inp^n}(\snsp)u^{\inp^n}(\snsp+ \pt)^*,
\EE
In the presence of multiple incoherent sources, emission from different  sources do not interfere.
Thus, the measured intensity in each shot is equivalent to $ \hat{I}_t^k=\sum_n \hat{S}^{\inp^n,k}_t$.
With the phase shifting interferometry in \equref{eq:ourSint} we extract
$I_t=\sum_n S^{\inp^n}_t $.

\subsection{Setup for the spatial incoherent target}
We wanted to test our algorithm on incoherent illumination layout of arbitrary complexity, as in \figref{fig:showlaser}. Thus, in addition to fluorescent beads we created incoherent illumination patterns with a shifting laser setup using the setup illustrated in \figref{fig:setupfull}.
 For this, we imaged the diffused output of a fibered laser ($635nm$) to generate a point focused exactly at the back of the tissue (we use the validation camera for accurate focusing). The main camera at the other side of the tissue captured  the intensity scattered from this spot. We then translate the fiber output on a programmable xy stage to generate spots at other positions behind the tissue. We capture a sequence of images at each source position and sum their {\em intensities}, thus simulating incoherent summation from multiple sources. This setup allows us to control the layout and complexity of the sources, which is useful for analyzing our algorithm with patterns of controlled complexities as in \figref{fig:autocorrelation} and replicating the same layout behind multiple tissue slices as in \figref{fig:compScontrast} and \ref{fig:compthickrecon}.

\begin{figure*}[t!]
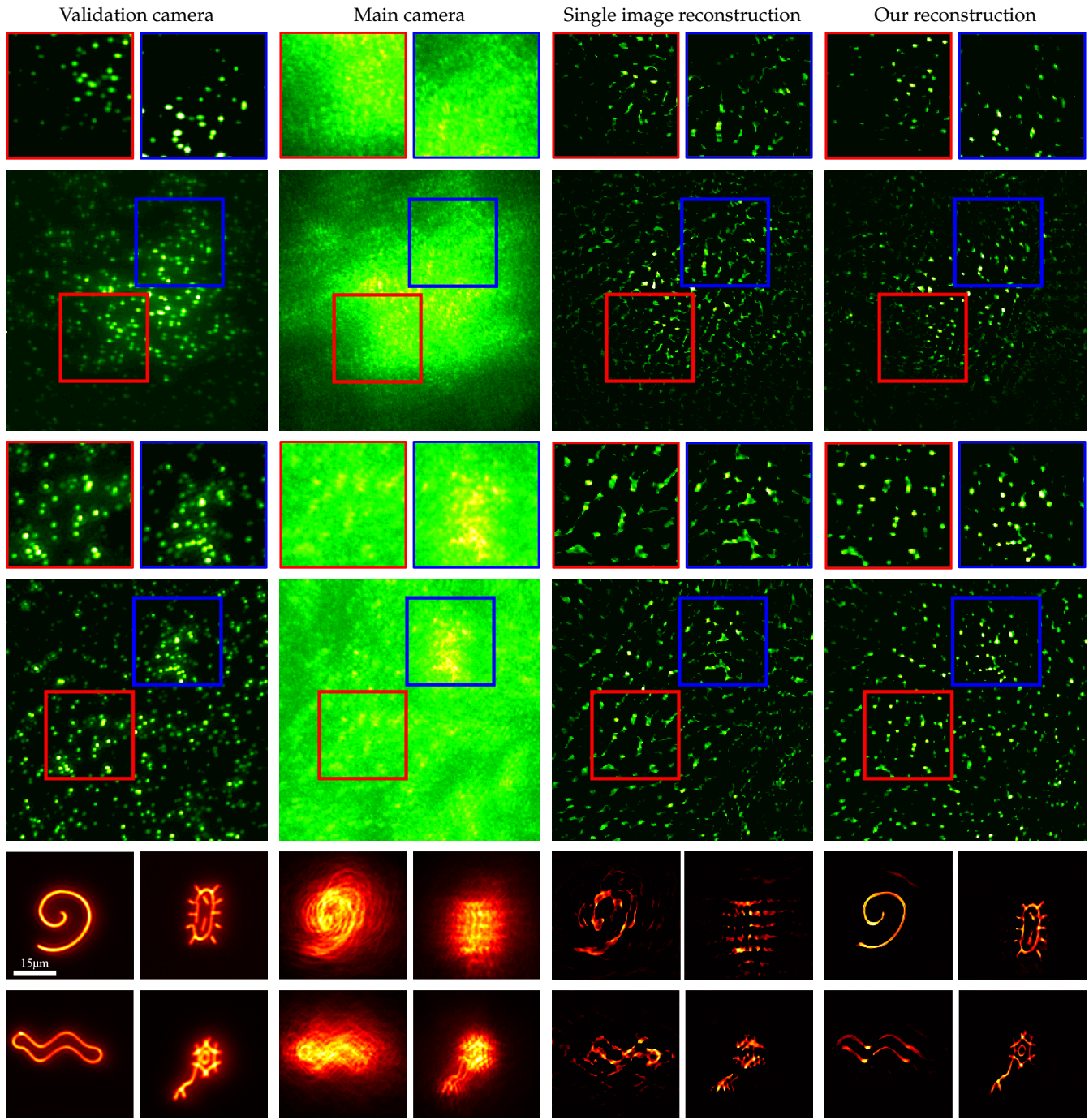

\centering
\begin{tabular}{*4{X}@{}}
Validation camera & Main camera & Single image reconstruction & Our reconstruction\\
\comparepics{./figure/fluorescent/150um_dense3_redo_part} \\
\comparepics{./figure/fluorescent/150um_dense3_redo_marked}\\
\comparepics{./figure/fluorescent/150um_dense4_redo_part} \\
\comparepics{./figure/fluorescent/150um_dense4_redo_marked}\\
\comparepics{./figure/nearfield/150um/mix_less}\\
\comparepics{./figure/nearfield/150um/mix_less2}\\
\end{tabular}
\caption{\textbf{Additional reconstruction results.} Top panel: results on a fluorescent bead image. Lower panel: results using  shifting laser inputs. Both inputs were images through  \textasciitilde$150\mu m$-thick tissue slices }
\label{fig:showadditional}
\end{figure*}

\subsection{Additional results}
In \figref{fig:showadditional} we show additional reconstruction results using fluorescent bead targets as well as shifting laser dots targets. Both used tissue slices of thickness around \textasciitilde$150{\mu}m$.
In \figref{fig:showparafilm} we use fluorescent beads behind a parafilm phantom, see characterization in \secref{sec:characterization}.

\subsection{Translating interferometry without phase correction}
As noted in \clmref{claim:trans-interf-auto-corr}, when using our \AlgName, we need a phase ramp correction to the auto-correlation of speckles. To support the claim, we compare results with and without the correction. In \figref{fig:autocorrelationfull}, we show the auto-correlation obtained by averaging  \AlgName measurements $I_t$ (\equref{eq:intef-It}) with and without the phase ramp correction of \equref{eq:avg-auto-corr-ramp}. The phase correction further improves the contrast, especially at larger displacements $\Dl$. In \figref{fig:compalgfull} we further show the reconstruction result of \AlgName with and without the phase ramp correction.
If the phase correction is not used not all the sources are recovered.
\begin{figure*}[t]
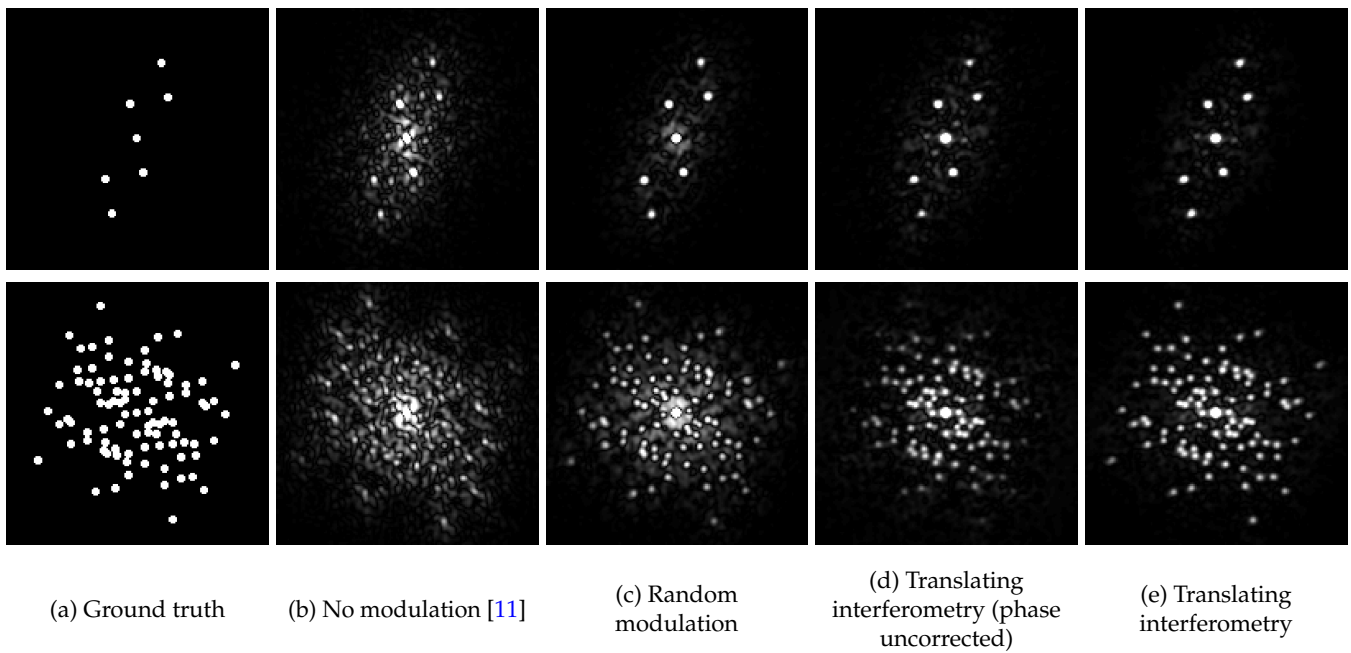

\centering
\begin{tabular}{*5{L}@{}}
\comparecorrfull{./figure/compare/collisionfree_fewer_A} \\
\comparecorrfull{./figure/compare/collisionfree_A} \\
(a) Ground truth & (b) No modulation \cite{SeeThroughSubmission}& (c) Random modulation
    & (d) Translating interferometry (phase uncorrected) & (e) Translating interferometry\\
\end{tabular}
\caption{\label{fig:autocorrelationfull}\textbf{Comparing speckle auto-correlation in \AlgName with and without correction.} (a) Input. (b) Auto-correlation from a single speckle image with no modulation. (c) Random phase mask in the Fourier plane. (d) \AlgNameC measurements without correction. The  larger displacements at the outer image regions are degraded due to phase variations. (e) After phase correction, {\AlgName} can recover further displacements.    }
\end{figure*} 
\begin{figure*}[!t]
\centering
\begin{tabular}{m{0.7em}*5{L}@{}}
&  \makecell{Validation\\ camera} & No modulation & Random modulation
 & \makecell{Translating\\ interferometry \\(phase uncorrected)} & \makecell{Translating\\ interferometry}\\
\comparealgfull{./figure/compare/collisionfree_ffrecon}{./figure/compare/collisionfree_recon} \\
\comparealgfull{./figure/compare/fluorescent_ffrecon_redo}{./figure/compare/fluorescent_recon_redo}
\end{tabular}
\caption{\textbf{Comparing \AlgName with and without phase correction.} In addition to \figref{fig:compalg}, we  compare \AlgName without phase correction.  Top part: sparse, simple shifting laser target. Lower part: challenging fluorescent beads target. For the simple target on the top, the full-frame algorithm using the auto-correlation of the {\AlgName} measurements without phase correction fails to recover the further beads. The local correlation approach which is more robust to noise can recover the target with and without correlation.  For the challenging target at the lower part, the full-frame algorithm fails completely even with phase correction. The local correlations algorithm can reconstruct the target using {\AlgName} modulations. However, without phase correction, it can only reconstruct a subset of the beads.}
\label{fig:compalgfull}
\end{figure*} 
\begin{figure}[!t]
\centering
\hspace{-2em}
\begin{tabular}{m{1.5em}SSS}
&\textasciitilde$100\mu m$ &\textasciitilde$150\mu m$ &\textasciitilde$200\mu m$\\
\textrot{\makecell{Input from \\ main camera}} &\comparethickness{speckle} \\
\textrot{\makecell{Our \\ reconstruction}} &\comparethickness{average_recon}
\end{tabular}
\includegraphics[width=18em]{./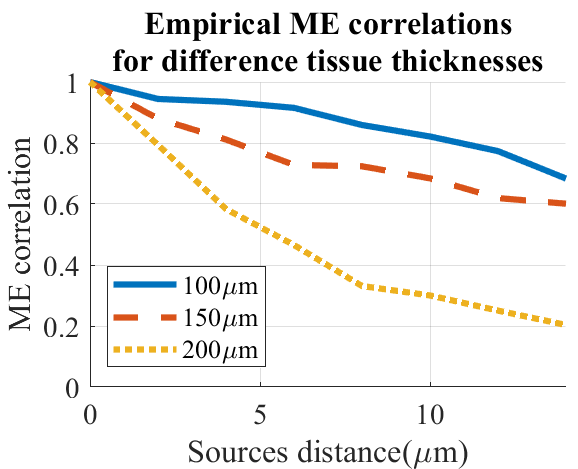}
\caption{\textbf{Compare reconstructions with different tissue thickness.}
Top panel: We use the shifting laser acquisition to capture the same illumination pattern through three different tissue slices of increasing thicknesses. As tissue thickness increase the spread of the speckles is wider and the statistical correlations are weaker. With  the \textasciitilde$200{\mu}m$-thick tissue the correlations are too wreak and reconstruction fails.
Lower panel: As the shifting laser setup allows capturing speckles from individual sources independently, we can empirically compute the decay of speckle correlation as a function of the distance between the sources. Indeed for thicker tissue slices the correlation decays faster as a function of source separation. This explains the reconstruction failure for the \textasciitilde$200{\mu}m$-thick tissue example.
}
\label{fig:compthickrecon}
\end{figure}

\subsection{Tissue thickness}
In \figref{fig:compthickrecon} we used the scanning laser setup to create the same illumination layout behind different tissue slices. This allows us to compare reconstructions through different tissue thicknesses.
For the thickest layer the reconstruction failed.

As discussed in~ \cite{SeeThroughSubmission}, thicker tissue leads to larger speckle spread and weaker memory effect correlation.
We used the fact that the scanning laser setup captures speckle patterns by different point sources separately, to evaluate statistical correlation through the different tissue layers.
In the top part of \figref{fig:compthickrecon} we plot the correlation we measured between speckle patterns generated by different sources, as a function of the displacement between the sources. As expected, as the tissue thickness  increases the correlation decays faster as a function of displacement, explaining the reconstruction failure in the lowest row.

These results demonstrate the limitation of our method: while we can improve correlation contrast, our approach is still based on the existence of some memory effect correlation and will fail when this correlation is too weak.

\subsection{Tissue preparation and characterization}\label{sec:characterization}
Most results in this paper used chicken breast tissue as a scattering material.
We cut thin slice from thawed chicken breast. To keep the tissue fresh, we did the fluorescent imaging experiment within 3 hours. The scanning laser targets require longer capture. To keep freshness  we squeeze the tissue between two cover glasses and seal them using nail polish.

As stated in \cite{schott2015characterization},  chicken breast tissue has an anisotropy parameter $g = 0.965$ and a mean free path (MFP) around $43.7\mu m$. However, these parameters may vary significantly between different tissue slices.

For a better characterization, we also use a parafilm phantom. This was calibrated in \cite{boniface2019noninvasive}, reporting an anisotropy parameter $g = 0.77$ and a MFP around $170\mu m$. We imaged through one parafilm layer whose  thickness is about $120\mu m$. Results for this phantom are demonstrated  in \figref{fig:showparafilm}. Note that while the parafilm has a longer MFP, it also has a smaller anisotropy factor and in practice the speckle spreads of both parafilm and chicken breast are comparable.

\begin{figure*}[t!]
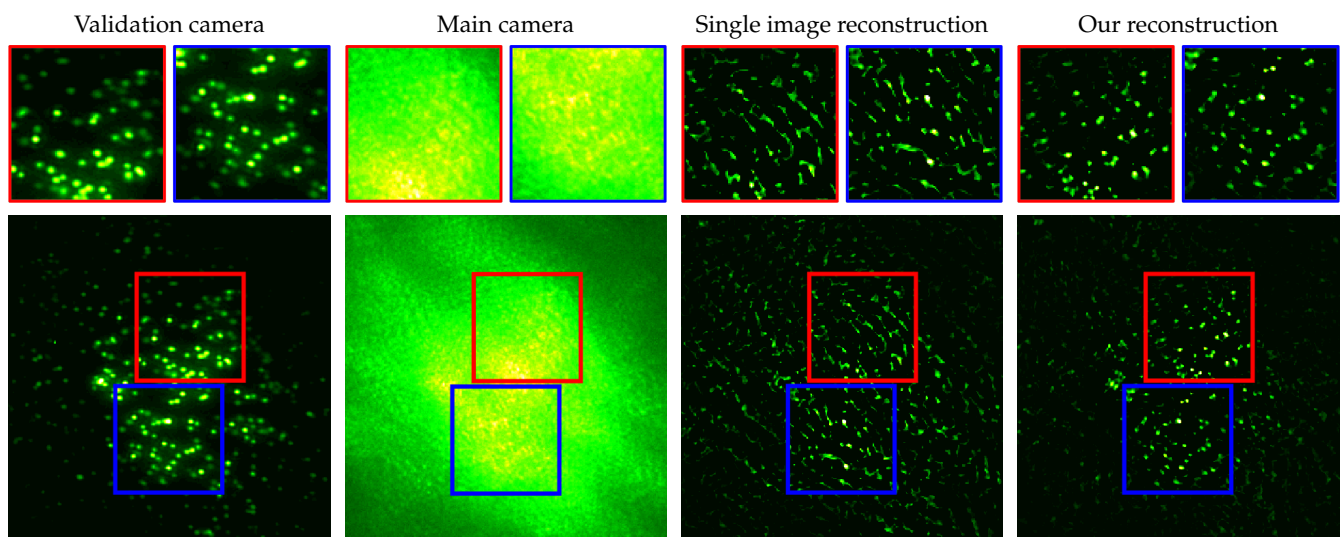

\centering
\begin{tabular}{*4{X}@{}}
Validation camera & Main camera & Single image reconstruction & Our reconstruction\\
\comparepics{./figure/fluorescent/parafilm_part} \\
\comparepics{./figure/fluorescent/parafilm_marked}\\
\end{tabular}
\caption{\textbf{Additional results.} reconstructing fluorescent beads behind a parafilm phantom. }
\label{fig:showparafilm}
\end{figure*}

\end{document}